\documentclass[sigconf]{acmart}
\usepackage[ruled,vlined]{algorithm2e}
\usepackage{multirow}
\usepackage{graphicx}
\usepackage{subcaption}
\usepackage{booktabs}
\usepackage[most]{tcolorbox}
\usepackage{balance}

\usepackage{listings}
\AtBeginDocument{%
  }

\copyrightyear{2026}
\acmYear{2026}
\setcopyright{cc}
\setcctype{by}
\acmConference[KDD 2026]{Proceedings of the 32nd ACM SIGKDD Conference on Knowledge Discovery and Data Mining V.2}{August 9--13, 2026}{Jeju Island, Republic of Korea}
\acmBooktitle{Proceedings of the 32nd ACM SIGKDD Conference on Knowledge Discovery and Data Mining V.2 (KDD 2026), August 9--13, 2026, Jeju Island, Republic of Korea}
\acmISBN{979-8-4007-2259-2/2026/08}
\acmDOI{10.1145/3770855.3818134}
\acmSubmissionID{v2rtp3489}

\begin{document}
\title{State Machine Guided Multi-Relational Synthetic Data from Logs for Anomaly Detection (Extended Version)}


\author{Aja Khanal}
\affiliation{%
  \institution{University of Western Ontario}
  \city{London}
  \country{Canada}}
\email{akhanal3@uwo.ca}

\author{Apurva Narayan}
\affiliation{%
  \institution{University of Western Ontario}
  \city{London}
  \country{Canada}}
\email{apurva.narayan@uwo.ca}







\renewcommand{\shortauthors}{Khanal and Narayan}

\begin{abstract}
Software systems generate massive unstructured logs that record execution behavior, failures, and interactions across components, yet existing log anomaly detection methods treat these logs primarily as flat sequences of templates, overlooking the relational execution structure that governs how events co-occur and evolve over time. We propose a framework that discovers this hidden structure by recovering an execution state machine directly from logs and inducing a corresponding multi-table relational schema connecting traces, events, states, transitions, and parameters. This discovered state machine serves as a generative prior to produce realistic multi-relational synthetic data that preserves structural, temporal, and process constraints while amplifying rare but valid execution behaviors. We assess the fidelity of the generated data through constraint validation, distributional similarity, and process-level metrics, and demonstrate its usefulness by showing that augmenting real logs with the synthetic relational data significantly improves anomaly and bug detection on held-out real datasets compared to sequence-based baselines and naive oversampling. Our results show that execution logs implicitly encode a relational database governed by a latent state machine, and that recovering this structure enables principled synthetic data generation for robust and interpretable anomaly detection. 
\newline
Code: https://github.com/Idsl-group/LogSynthFSM-KDD-2026
\end{abstract}

\begin{CCSXML}
<ccs2012>
   <concept>
       <concept_id>10010147.10010178.10010219.10010220</concept_id>
       <concept_desc>Computing methodologies~Multi-agent systems</concept_desc>
       <concept_significance>500</concept_significance>
       </concept>
   <concept>
       <concept_id>10010147.10010178.10010179.10010182</concept_id>
       <concept_desc>Computing methodologies~Natural language generation</concept_desc>
       <concept_significance>300</concept_significance>
       </concept>
 </ccs2012>
\end{CCSXML}

\ccsdesc[500]{Computing methodologies~Multi-agent systems}
\ccsdesc[300]{Computing methodologies~Natural language generation}

\ccsdesc[500]{Computing methodologies~Multi-agent systems}

\keywords{Log Mining,Relational Data,Synthetic Data Generation,Anomaly Detection,Knowledge Discovery}


\maketitle

\section{Introduction}

Modern software systems generate massive volumes of unstructured execution logs that record system behavior, failures, and interactions across components. These logs are central to anomaly detection and failure diagnosis, and have motivated extensive research in log mining, parsing, and log-based anomaly detection~\cite{he2017deeplog,du2017spell,zhang2019loganomaly,du2019loganomaly,fu2022survey}. However, most existing approaches model logs as flat sequences of events or templates, overlooking the relational and process-level structure that governs how executions evolve over time. This simplification limits both detection accuracy and interpretability, especially in large-scale, distributed systems where failures often arise from violations of execution flow rather than isolated events.

In practice, execution logs implicitly encode latent control flow, dependencies, and state transitions. Software executions proceed through recurring states, follow structured transition patterns, and involve relational interactions among parameters, resources, and components. Sequence-based models, including recurrent and Transformer-based approaches~\cite{he2017deeplog,zhang2019loganomaly,guo2021logbert,liao2025log}, can learn rich temporal dependencies and even FSA-like state-tracking behavior in controlled settings~\cite{zhang2025fsa}, but they do not by themselves explicitly enforce execution-state structure, transition legality, or relational consistency during generation. This distinction is important because unconstrained LLM-based extraction and generation pipelines can exhibit structural failures~\cite{tan2026structural}; in our ablation, removing the State Discovery Agent reduces transition validity from 0.985 to 0.714 and rare-event PR-AUC from 0.511 to 0.341, indicating that the gain comes from execution-aware constraints rather than generic sequential modeling alone.

Synthetic data generation has been explored to mitigate data sparsity and class imbalance in log-based anomaly detection~\cite{xu2018logbased,zhang2019logpai}. Existing log synthesis and augmentation methods typically operate at the template or sequence level and fail to respect underlying execution semantics, often producing unrealistic or structurally invalid traces~\cite{du2019loganomaly}. More fundamentally, these approaches assume that the structure governing log generation is known in advance or fixed, whereas in real systems this structure is latent, evolving, and must be inferred directly from raw logs~\cite{he2020loghub}.

Recent advances in large language models (LLMs) have demonstrated strong capabilities in understanding, summarizing, and restructuring unstructured text~\cite{brown2020language,wei2022chain}, suggesting their potential for extracting latent execution knowledge from logs. However, directly applying LLMs in a single-pass manner is insufficient due to context window limitations, lack of structural guarantees, and susceptibility to hallucination when generating long or complex execution traces. These challenges motivate the need for structured, coordinated reasoning and explicit constraint enforcement rather than isolated generation.

We propose \textbf{LogSynthFSM}, a multi-agent LLM framework that discovers latent execution structure from logs and leverages this structure to generate \textbf{multi-table relational synthetic data} for anomaly detection. LogSynthFSM decomposes the task into specialized agents that collaboratively parse logs, infer an execution state machine, induce a relational schema, and synthesize data under structural, temporal, and process constraints. The recovered state machine serves as a generative prior, enabling realistic multi-relational data generation while selectively amplifying rare but valid execution behaviors. We evaluate the generated data using constraint validity, distributional similarity, and process fidelity metrics, and demonstrate that augmenting real logs with synthetic relational data from LogSynthFSM significantly improves anomaly and bug detection on held-out real datasets compared to sequence-based baselines and naive oversampling. These results show that execution logs implicitly encode a relational database governed by a latent state machine, and that LLM-guided multi-agent structure discovery enables principled synthetic data generation for robust and interpretable anomaly detection.


\section{Background and Problem Formulation}

\subsection{Classical Methods}

Traditional log-based anomaly detection methods address system monitoring by modeling logs as sequences of events or templates. Early approaches rely on rule-based parsing and statistical modeling, such as invariant mining and frequency analysis~\cite{xu2009detecting,lin2016logclustering}, while more recent methods employ deep learning techniques including recurrent neural networks and Transformers to capture sequential patterns in logs~\cite{he2017deeplog,zhang2019loganomaly,du2019loganomaly,guo2021logbert}. These methods typically operate on template sequences extracted from raw logs using parsers such as Drain or Spell~\cite{he2017drain,du2017spell}, and detect anomalies as deviations from learned sequence distributions.

To mitigate data sparsity and class imbalance, several approaches explore log data augmentation and synthesis~\cite{xu2018logbased,zhang2019logpai}. Common strategies include oversampling rare templates, perturbing existing sequences, or interpolating hidden representations to generate additional training examples~\cite{chandola2009anomaly}. Despite improving coverage, these methods largely ignore the underlying execution semantics of software systems. In particular, they do not explicitly model execution states, transitions, or relational dependencies among events, parameters, and resources, often resulting in unrealistic or semantically incoherent synthetic traces that limit generalization to complex failure modes.

\subsection{LLM-Based Multi-Agent Systems}

Recent advances in large language models (LLMs) have enabled strong performance in understanding, reasoning over, and generating structured outputs from unstructured text~\cite{brown2020language,wei2022chain}. Beyond single-pass prompting, LLM-powered multi-agent systems have emerged as a promising paradigm for complex tasks that require coordination, decomposition, and iterative refinement. General frameworks such as Self-Instruct~\cite{wang2022self}, AutoGPT-style agentic workflows~\cite{significant2023autogpt}, and collaborative simulation systems~\cite{park2023generative} demonstrate that decomposing objectives across specialized agents can yield more robust and controllable behavior than monolithic generation.

In data synthesis and structured generation contexts, multi-agent LLM frameworks have been explored for graph, knowledge base, and relational data generation, leveraging semantic understanding while enforcing explicit structural constraints~\cite{du2024graphmaster,chen2023agent, khanal2026agentsofdiffusion}. These approaches show that agent specialization and feedback-driven refinement are effective for maintaining global consistency in complex outputs. However, existing methods typically assume that the underlying structure, such as a graph schema, ontology, or topology, is given a priori. In contrast, execution logs present a fundamentally different challenge: the governing structure is latent, noisy, and must be discovered directly from unstructured text. The problem of jointly discovering latent execution structure from logs and using it to guide multi-table relational synthetic data generation remains largely unexplored.
.

\subsection{Problem Formulation: Relational Synthesis}

\paragraph{Execution Logs.}
We consider a collection of execution logs represented as a set of traces
\[
\mathcal{L} = \{T_1, T_2, \dots, T_M\},
\]
where each trace $T_m = (e_1, e_2, \dots, e_{n_m})$ is an ordered sequence of events. Each event $e_i$ is associated with a timestamp, a log template, and a set of extracted parameters.

\paragraph{Execution State Machine.}
We assume that the logs are generated by an underlying execution process governed by a latent finite state machine (FSM)
\[
\mathcal{F} = (\mathcal{S}, \mathcal{A}, \delta),
\]
where $\mathcal{S} = \{s_1, s_2, \dots, s_K\}$ is a set of execution states, $\mathcal{A}$ denotes observable events, and $\delta : \mathcal{S} \times \mathcal{A} \rightarrow \mathcal{S}$ is the state transition function. The FSM is not observed directly and must be inferred from raw logs.

\paragraph{Knowledge Extraction.}
Given the length and heterogeneity of logs, we define a knowledge extraction function
\[
\Phi : \mathcal{L} \rightarrow \mathcal{K},
\]
which maps the log corpus to a compact knowledge representation
\[
\mathcal{K} = \Phi(\mathcal{L}) = (\mathcal{S}_k, \mathcal{T}_k, \Theta_k),
\]
where $\mathcal{S}_k$ is a set of inferred execution states, $\mathcal{T}_k \subseteq \mathcal{S}_k \times \mathcal{S}_k$ is a set of state transitions, and $\Theta_k$ represents state-conditional parameter and timing distributions. The extraction function $\Phi$ is implemented via a coordinated multi-agent LLM framework to ensure semantic coherence while remaining within context constraints.

\paragraph{Relational Schema Induction.}
Based on the extracted knowledge $\mathcal{K}$, we induce a multi-table relational schema
\[
\mathcal{R} = \{R_{\text{trace}}, R_{\text{event}}, R_{\text{state}}, R_{\text{transition}}, R_{\text{param}}\},
\]
which captures execution traces, events, states, transitions, and event parameters as interconnected relations with primary and foreign key constraints.

\paragraph{Synthetic Data Generation.}
We formalize multi-relational synthetic log generation as a function
\[
\Psi : \mathcal{K} \rightarrow \mathcal{D}_s,
\]
where $\mathcal{D}_s$ denotes a set of synthetic relational tables conforming to schema $\mathcal{R}$. The generation process samples valid state paths from the inferred FSM and synthesizes event and parameter records conditioned on state-specific distributions while enforcing relational, temporal, and process constraints.

\section{Methodology}

\begin{figure*}[t]
    \centering
    \includegraphics[width=\textwidth]{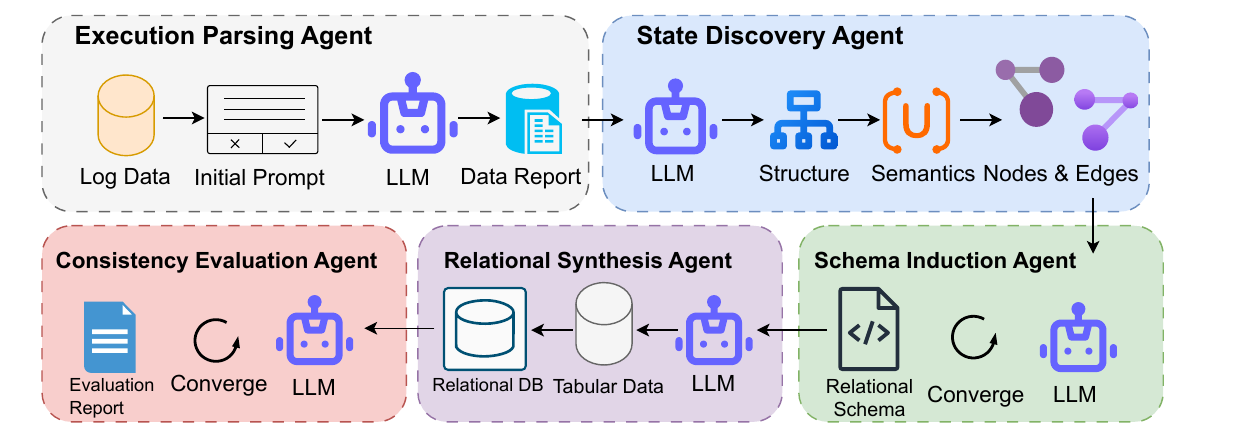}
    \caption{Overview of LogSynthFSM. The system converts raw logs into an execution-aware multi-relational representation, synthesizes realistic traces guided by an inferred state machine, and improves downstream anomaly detection through structure-preserving data generation.}
    \label{fig:overview}
\end{figure*}

Our framework, \textbf{LogSynthFSM}, adopts a coordinated multi-agent design in which specialized agents collaboratively discover execution structure from logs and generate multi-table relational synthetic data under structural and process constraints. Each agent is responsible for a distinct stage of the pipeline as shown in Figure \ref{fig:overview}.

\subsection{System Overview}

LogSynthFSM formulates multi-table relational synthetic data generation from execution logs as an iterative, structure-guided reasoning process implemented through a coordinated multi-agent framework. While large language models (LLMs) can parse and reason over unstructured text, applying a single-pass LLM to log synthesis is insufficient due to the scale, structural complexity, and strict semantic constraints of real-world execution data. In particular, execution logs exceed typical context limits, require consistent reasoning over latent states and transitions, and demand explicit control to prevent invalid or hallucinated process flows.

To address these challenges, LogSynthFSM decomposes the synthesis task into specialized agents that operate in a closed-loop pipeline. The overall system is formalized as
\[
\mathcal{D}_{\text{aug}} = \Psi_{\text{FSM}}(\mathcal{L}, A_{\text{parse}}, A_{\text{state}}, A_{\text{schema}}, A_{\text{synth}}, A_{\text{eval}}),
\]
where $\mathcal{L}$ denotes the raw execution logs, and $A_{\text{parse}}$, $A_{\text{state}}$, $A_{\text{schema}}$, $A_{\text{synth}}$, and $A_{\text{eval}}$ correspond to the Execution Parsing, State Discovery, Schema Induction, Relational Synthesis, and Consistency Evaluation agents, respectively.

In each iteration, the Execution Parsing Agent converts raw logs into structured events and extracted parameters. The State Discovery Agent infers latent execution states and transitions, recovering an execution state machine that captures normal system behavior. The Schema Induction Agent then constructs a multi-table relational schema with explicit integrity constraints. Guided by the inferred state machine, the Relational Synthesis Agent generates synthetic relational executions by sampling valid state paths and synthesizing event and parameter records under structural and temporal constraints. The Consistency Evaluation Agent assesses the generated data using constraint validity, distributional similarity, and process fidelity metrics, and provides feedback to guide subsequent iterations. This multi-agent, closed-loop design enables LogSynthFSM to enforce execution semantics, relational consistency, and process validity throughout synthesis, producing high-quality multi-relational synthetic data that cannot be reliably achieved with single-pass LLM generation.

The iteration budget $T$ bounds closed-loop refinement and ensures termination even when candidate batches fail consistency checks. The algorithm retains only batches that satisfy the structural, process, and distributional criteria imposed by the Consistency Evaluation Agent. If no candidate is accepted within the budget, the synthetic component remains empty and downstream training uses only the real relational tables. Empirically, Algorithm~\ref{alg:logsynthfsm} reaches an accepted batch within this budget on all four datasets. This suggests that the FSM and induced schema provide sufficiently strong constraints to keep refinement stable in practice.

\begin{algorithm}[t]
\caption{\textsc{LogSynthFSM}}
\label{alg:logsynthfsm}
\KwIn{Execution logs $\mathcal{L}$; augmentation budget $B$; maximum iterations $T$}
\KwOut{Augmented relational dataset $\mathcal{D}_{\text{aug}}$}

\BlankLine
\textbf{Execution Parsing}\;
$\mathcal{E} \leftarrow A_{\text{parse}}(\mathcal{L})$\;  

\textbf{State Discovery}\;
$\mathcal{F} \leftarrow A_{\text{state}}(\mathcal{E})$\; 

\textbf{Schema Induction}\;
$(\mathcal{R}, \mathcal{D}) \leftarrow A_{\text{schema}}(\mathcal{E}, \mathcal{F})$\; 

\BlankLine
$\mathcal{D}_{\text{best}} \leftarrow \emptyset$\;
\For{$t \leftarrow 1$ \KwTo $T$}{
    $\mathcal{D}_s^{(t)} \leftarrow A_{\text{synth}}(\mathcal{F}, \mathcal{R}, B)$\; 
    $\text{valid}^{(t)} \leftarrow A_{\text{eval}}(\mathcal{D}_s^{(t)}, \mathcal{F}, \mathcal{R})$\; 
    \If{$\text{valid}^{(t)}$}{
        $\mathcal{D}_{\text{best}} \leftarrow \mathcal{D}_s^{(t)}$\;
        \textbf{break}\;
    }
}
\BlankLine
$\mathcal{D}_{\text{aug}} \leftarrow \mathcal{D} \cup \mathcal{D}_{\text{best}}$\; 
\Return $\mathcal{D}_{\text{aug}}$\;
\end{algorithm}

\subsection{Execution Parsing Agent}

The Execution Parsing Agent converts raw, unstructured logs into structured event representations suitable for downstream reasoning. Although logs are textual, they are generated by deterministic program behaviors and follow a template-based structure rather than natural language distributions. Directly reasoning over raw log text obscures execution semantics and introduces unnecessary variability that hinders structure discovery.

We model logs as a sequence of entries $\mathcal{L} = \{\ell_1, \dots, \ell_N\}$, where each entry is generated from a latent template with concrete parameters. The agent decomposes each log entry into a structured event
\[
e_i = (t_i, \theta_i, \tau_i),
\]
where $t_i$ denotes the recovered template, $\theta_i$ represents extracted parameters, and $\tau_i$ is the timestamp. Template extraction reduces lexical noise and exposes stable execution patterns, acting as a compression operator that preserves execution-relevant information.

To ensure consistency and scalability, the agent performs deterministic parsing and normalizes parameters into typed representations, enabling relational reasoning and statistical modeling. By transforming long log streams into compact event sequences, the Execution Parsing Agent also mitigates LLM context limitations, allowing subsequent agents to reason over execution behavior at scale. This structured abstraction forms the foundation for execution state discovery, relational schema induction, and synthetic data generation in LogSynthFSM.

\subsection{State Discovery Agent}

The State Discovery Agent infers latent execution states and transitions from structured event sequences. Although logs are observed as linear sequences of events, they are generated by an underlying execution process governed by unobserved control states. Recovering these latent states is essential for modeling execution semantics and enabling structure-guided data synthesis.

We model the execution process as a finite state machine (FSM)
\[
\mathcal{F} = (\mathcal{S}, \mathcal{A}, \delta),
\]
where $\mathcal{S}$ denotes a finite set of execution states, $\mathcal{A}$ is the alphabet of observed event templates, and $\delta$ is the state transition function. Each event is emitted conditionally on the current latent state, and the FSM itself must be inferred from event traces.

The agent seeks a state abstraction that balances expressiveness and identifiability. Defining states at the template level leads to over-fragmentation, while overly coarse abstractions obscure meaningful execution phases. To address this, the agent groups events into states such that the resulting state sequence approximately satisfies a Markov property:
\[
P(e_{i+1} \mid e_1, \dots, e_i) \approx P(e_{i+1} \mid s_i),
\]
ensuring that inferred states capture the execution history relevant for predicting future behavior while minimizing state complexity.

To operate under context and data constraints, the State Discovery Agent performs iterative, LLM-guided clustering and refinement over compact event summaries, jointly inferring states and transitions to produce a globally coherent FSM. The recovered FSM provides the structural prior for relational schema induction and synthetic data generation in LogSynthFSM, and therefore prioritizes stability and interpretability over maximal granularity.

\subsection{Schema Induction Agent}

The Schema Induction Agent converts the inferred execution state machine into an explicit multi-table relational representation. While execution behavior is inherently relational, linking traces, events, states, and parameters, this structure is implicit in logs and must be induced to enable principled synthesis.

Given an inferred FSM $\mathcal{F}$ and structured events $\{e_i\}$, the agent constructs a relational schema
\[
\mathcal{R} = \{R_{\text{trace}}, R_{\text{event}}, R_{\text{state}}, R_{\text{transition}}, R_{\text{param}}\},
\]
with primary and foreign key constraints encoding execution consistency. The schema is induced to satisfy two properties: \emph{structural faithfulness}, ensuring all executions correspond to valid FSM paths, and \emph{generative sufficiency}, ensuring the schema retains all information required to reproduce execution behavior under the FSM.

By factorizing execution data into normalized relational components, schema induction separates control flow from state-conditional attributes. This factorization constrains the space of admissible synthetic data to structurally valid executions and provides the structural interface between state discovery and relational synthesis in LogSynthFSM.

\subsection{Relational Synthesis Agent}

The Relational Synthesis Agent generates multi-table relational synthetic data conditioned on the execution structure recovered by the preceding agents. Unlike sequence-level augmentation methods, relational synthesis operates over a factorized representation of execution behavior, requiring coordinated generation across multiple relations while preserving global process semantics.

Formally, given an inferred execution state machine $\mathcal{F} = (\mathcal{S}, \mathcal{A}, \delta)$ and relational schema $\mathcal{R}$, the agent synthesizes a set of relational instances
\[
\mathcal{D}_s = \{R_s \mid R \in \mathcal{R}\},
\]
such that all generated tuples satisfy schema constraints and correspond to valid execution paths under $\mathcal{F}$. Generation proceeds by sampling state trajectories from the FSM and conditionally generating event and parameter records according to state-specific distributions.

The synthesis objective is guided by two principles. First, \emph{process validity}: every synthetic execution must correspond to a legal FSM traversal, ensuring structural coherence across relations. Second, \emph{distributional fidelity}: synthetic tuples must match the empirical distributions of state-conditional attributes observed in real executions. By enforcing these principles, the agent enables amplification of rare but valid execution behaviors without introducing structurally invalid or semantically implausible data.

From a modeling perspective, relational synthesis decouples control flow generation from attribute realization, allowing new executions to be composed by recombining valid state paths with realistic parameter instantiations. Within LogSynthFSM, the Relational Synthesis Agent serves as the primary mechanism for expanding the execution dataset while remaining tightly constrained by the discovered execution model.

\begin{table*}
\centering
\caption{Results (mean $\pm$ std across datasets) for anomaly detection utility and synthetic-data fidelity. Higher is better ($\uparrow$) except where noted ($\downarrow$). For C2ST AUC, values closer to 0.5 indicate synthetic traces are harder to distinguish from real traces. Averaged across 25 runs, for all datasets and LLMs.}
\label{tab:main_metrics_avg}
\resizebox{\textwidth}{!}{%
\begin{tabular}{llcccccc}
\toprule
\textbf{Type} & \textbf{Model} &
\textbf{PR-AUC} $\uparrow$ &
\textbf{FPR@95\%R} $\downarrow$ &
\textbf{Rare PR-AUC} $\uparrow$ &
\textbf{TVR} $\uparrow$ &
\textbf{$k$-gram JS} $\downarrow$ &
\textbf{C2ST AUC} (ideal $\approx 0.50$) \\
\midrule
Original & Original Model
& 0.612$\pm$0.041 & 0.138$\pm$0.024 & 0.331$\pm$0.052 & 0.862$\pm$0.033 & 0.214$\pm$0.031 & 0.742$\pm$0.039 \\
\midrule
Classic-Aug & GAugO
& 0.623$\pm$0.038 & 0.132$\pm$0.020 & 0.348$\pm$0.049 & 0.871$\pm$0.029 & 0.203$\pm$0.029 & 0.721$\pm$0.041 \\
\midrule
LLM-Aug & GraphEdit
& 0.655$\pm$0.044 & 0.121$\pm$0.019 & 0.382$\pm$0.057 & 0.889$\pm$0.026 & 0.189$\pm$0.027 & 0.681$\pm$0.036 \\
 & LLM4RGNN
& 0.641$\pm$0.047 & 0.126$\pm$0.021 & 0.369$\pm$0.061 & 0.881$\pm$0.028 & 0.196$\pm$0.030 & 0.694$\pm$0.034 \\
\midrule
Classic-Syn & GraphSMOTE
& 0.628$\pm$0.050 & 0.130$\pm$0.027 & 0.356$\pm$0.063 & 0.903$\pm$0.022 & 0.181$\pm$0.033 & 0.662$\pm$0.029 \\
 & G-Mixup
& 0.619$\pm$0.046 & 0.134$\pm$0.025 & 0.342$\pm$0.058 & 0.898$\pm$0.021 & 0.187$\pm$0.031 & 0.671$\pm$0.033 \\
 & IntraMix
& 0.579$\pm$0.052 & 0.146$\pm$0.029 & 0.301$\pm$0.069 & 0.876$\pm$0.027 & 0.209$\pm$0.036 & 0.708$\pm$0.040 \\
 & GraphAdasyn
& 0.636$\pm$0.048 & 0.127$\pm$0.023 & 0.363$\pm$0.060 & 0.912$\pm$0.018 & 0.176$\pm$0.029 & 0.651$\pm$0.031 \\
 & FG-SMOTE
& 0.631$\pm$0.053 & 0.129$\pm$0.026 & 0.359$\pm$0.067 & 0.906$\pm$0.020 & 0.179$\pm$0.034 & 0.659$\pm$0.028 \\
 & AGMixup
& 0.603$\pm$0.049 & 0.141$\pm$0.028 & 0.328$\pm$0.064 & 0.893$\pm$0.024 & 0.195$\pm$0.035 & 0.676$\pm$0.037 \\
\midrule
LLM-Syn & GAG
& 0.671$\pm$0.043 & 0.115$\pm$0.020 & 0.401$\pm$0.055 & 0.931$\pm$0.016 & 0.162$\pm$0.025 & 0.592$\pm$0.030 \\
 & LLM4NG
& 0.622$\pm$0.058 & 0.129$\pm$0.031 & 0.334$\pm$0.072 & 0.944$\pm$0.014 & 0.155$\pm$0.024 & 0.563$\pm$0.027 \\
 & Mixed-LLM
& 0.684$\pm$0.039 & 0.110$\pm$0.018 & 0.416$\pm$0.052 & 0.948$\pm$0.012 & 0.149$\pm$0.022 & 0.545$\pm$0.021 \\
 & Synthesis-LLM
& 0.677$\pm$0.045 & 0.112$\pm$0.019 & 0.409$\pm$0.058 & 0.952$\pm$0.011 & 0.146$\pm$0.023 & 0.537$\pm$0.020 \\
\midrule
LLM-Syn & GraphMaster
& 0.701$\pm$0.036 & 0.103$\pm$0.017 & 0.438$\pm$0.049 & 0.959$\pm$0.009 & 0.137$\pm$0.021 & 0.524$\pm$0.018 \\
\midrule
\textbf{Ours} & \textbf{LogSynthFSM}
& \textbf{0.742$\pm$0.033} & \textbf{0.087$\pm$0.014} & \textbf{0.511$\pm$0.046} & \textbf{0.985$\pm$0.006} & \textbf{0.091$\pm$0.015} & \textbf{0.503$\pm$0.011} \\
\bottomrule
\end{tabular}%
}
\end{table*}

\subsection{Consistency Evaluation Agent}

The Consistency Evaluation Agent assesses the structural and semantic validity of synthesized relational data and provides feedback to guide iterative refinement. While the Relational Synthesis Agent generates data conditioned on the inferred execution model, generation alone does not guarantee adherence to execution semantics, distributional realism, or process-level coherence. Explicit evaluation is therefore required to prevent drift, hallucination, and structural violations. Formally, given synthesized relational data $\mathcal{D}_s$ under schema $\mathcal{R}$ and execution state machine $\mathcal{F}$, the agent evaluates consistency along three complementary dimensions. \emph{Structural consistency} verifies that all tuples satisfy schema constraints, including primary and foreign key integrity, temporal ordering, and valid state transitions. \emph{Process consistency} ensures that each synthesized execution corresponds to a legal traversal of $\mathcal{F}$, enforcing alignment between relational records and the recovered execution dynamics. \emph{Distributional consistency} compares state-conditional attribute distributions in $\mathcal{D}_s$ to those observed in real data, identifying deviations that indicate unrealistic synthesis.

These evaluations define a constraint-driven acceptance criterion over the synthetic data space, restricting admissible generations to those consistent with both the relational schema and the execution model. From a theoretical perspective, the Consistency Evaluation Agent acts as a projection operator that maps unconstrained generative outputs back onto the feasible set defined by $\mathcal{F}$ and $\mathcal{R}$. Violations detected during evaluation are propagated as feedback signals that inform subsequent synthesis iterations. Within LogSynthFSM, the Consistency Evaluation Agent closes the multi-agent loop by coupling generation with validation. By enforcing explicit structural, process, and distributional constraints throughout synthesis, the agent ensures that synthetic relational data remains grounded in discovered execution semantics, enabling reliable augmentation for downstream anomaly detection.

\section{Experiments}
\subsection{Experimental Setup}
\paragraph{Hardware.}
All experiments were conducted on a single consumer-grade workstation equipped with an AMD Ryzen 9 7900X CPU (12 cores, 24 threads, 4.7 GHz base clock), 32 GB of DDR5 memory, and an NVIDIA RTX 4080 SUPER GPU with 16 GB of VRAM. This setup reflects hardware that is readily accessible to individual researchers and practitioners, demonstrating that LogSynthFSM does not rely on specialized infrastructure or large-scale compute clusters for execution state discovery or relational data synthesis.

\paragraph{Models.}
Our experiments employ open-source large language models to implement the multi-agent components of LogSynthFSM, highlighting the framework’s flexibility and independence from specialized hardware. Autoregressive LLMs are used for the Execution Parsing, State Discovery, Schema Induction, and Consistency Evaluation agents, where reliable semantic understanding and structured reasoning are required. We evaluate several 7B--9B parameter-scale models, including LLaMA-3.1 8B~\cite{touvron2023llama3} (32 layers, 40 attention heads, 4-bit quantization; greedy decoding with temperature 0), Qwen-3 8B~\cite{bai2023qwen} (multilingual, LoRA-enabled; deterministic decoding), DeepSeek-R1 8B~\cite{deepseek2024r1} (NTK-aware; greedy decoding), Gemma-2 9B~\cite{gemma2024} (beam search with width 3), and Mistral 7B~\cite{jiang2023mistral} (grouped-query attention with 8-bit decoding). For simplicity and reproducibility, the same autoregressive model configuration is used across all non-generative agents within a given experiment. The Relational Synthesis Agent uses the same underlying autoregressive model but operates with stochastic decoding to encourage diversity while remaining constrained by the inferred execution state machine and relational schema. Specifically, we use temperature 0.7 with nucleus sampling at $p = 0.9$. All models operate with a maximum context length of 1{,}024 tokens and run in mixed-precision (FP16 or 8-bit) modes where supported. Across all model configurations, LogSynthFSM exhibits stable and consistent performance without reliance on model-specific optimizations or specialized compute infrastructure.

\paragraph{Datasets.}
We evaluate LogSynthFSM on four widely used real-world system log datasets: BGL~\cite{xu2015detecting}, Thunderbird~\cite{xu2015detecting}, Hadoop~\cite{lin2016logclustering}, and OpenStack~\cite{he2016experience}. These datasets are sourced from \textsc{LogHub}~\cite{zhu2019loghub}, a standardized benchmark repository for log analysis that provides diverse logs collected from large-scale production systems. We select these datasets to cover a range of system domains, including high-performance computing (BGL, Thunderbird), distributed data processing frameworks (Hadoop), and cloud infrastructure platforms (OpenStack). Each dataset exhibits distinct characteristics in terms of log volume, template diversity, execution complexity, and anomaly patterns, making them well suited for evaluating the robustness and generality of execution structure discovery and relational synthetic data generation. Importantly, these logs contain rare and complex failure behaviors that are sparsely represented in the original data, providing a realistic setting to assess whether LogSynthFSM can recover latent execution structure and generate structurally valid synthetic data that improves downstream anomaly and bug detection.

\paragraph{Baselines.}
We evaluate LogSynthFSM against a broad and representative set of baselines spanning classical augmentation, graph-based synthesis, and recent LLM-driven approaches. The baselines fall into five categories. First, we include \emph{original training} without any form of data synthesis to establish a lower-bound reference. Second, we consider \emph{classic data augmentation} methods such as GAugO~\cite{liu2021gaug}, which modify existing structures without generating new instances. Third, we evaluate \emph{LLM-based data augmentation} approaches, including GraphEdit~\cite{xia2023graphedit} and LLM4RGNN~\cite{wang2023llm4rgnn}, which leverage language models to edit or enrich existing graph data. Fourth, we include a diverse set of \emph{classic data synthesis} methods that generate new samples through interpolation or resampling, namely GraphSMOTE~\cite{zhao2021graphsmote}, G-Mixup~\cite{verma2019graphmixup}, IntraMix~\cite{wang2022intramix}, GraphADASYN~\cite{ding2021graphadasyn}, FG-SMOTE~\cite{zhao2023fgsmote}, and AGMixup~\cite{liu2022agmixup}. Finally, we compare against \emph{LLM-based data synthesis} methods, including GAG~\cite{zhang2023gag} and LLM4NG~\cite{wang2023llm4ng}, as well as prompt-based synthesis variants (Mixed-LLM and Synthesis-LLM) that isolate the effects of naive LLM generation without explicit structural guidance. For fairness, all baselines operate on execution graphs derived from the same parsed log traces: non-LLM baselines receive directed event graphs with template embeddings as node features, while LLM baselines receive the same traces represented as text-attributed graphs. We additionally include GraphMaster~\cite{du2024graphmaster} as a strong multi-agent LLM synthesis baseline and apply it according to its original text-attributed graph formulation. All methods are evaluated under identical downstream anomaly detection models and evaluation protocols to ensure a controlled and meaningful comparison.

The multi-agent design decomposes the long-horizon synthesis task into verifiable stages, exposes checkable intermediate artifacts, and filters invalid outputs through closed-loop validation. These checks are empirically necessary: removing the Consistency Evaluation Agent lowers PR-AUC from $0.742$ to $0.703$ and increases C2ST AUC from $0.503$ to $0.567$, showing that unchecked errors propagate into less realistic and less useful traces. Replacing the pipeline with a single-pass LLM also degrades downstream and fidelity metrics, indicating that decomposition is functional rather than cosmetic. Detailed prompts and representative agent outputs are provided in the appendix.

\paragraph{Evaluation Metrics.}
We evaluate LogSynthFSM using a unified set of downstream anomaly detection and structure-aware validation metrics to assess both utility and fidelity of the generated data. A graph neural network (GNN)–based anomaly detector is trained on augmented training data and evaluated exclusively on held-out real test logs. We report PR-AUC as the primary metric due to the severe class imbalance in system log anomaly detection, along with FPR at 95\% recall to capture operational performance at high sensitivity. To specifically measure gains on underrepresented behaviors, we additionally report PR-AUC (or F1) on a rare-event slice of the test set defined by infrequent FSM transitions or low-occupancy states. To validate the realism and structural correctness of generated data, we report transition validity rate, which measures adherence to the inferred execution state machine, $k$-gram path Jensen--Shannon divergence to quantify process-level similarity between real and synthetic executions, and a trace-level classifier two-sample test (C2ST) AUC to evaluate the distinguishability of synthetic and real traces. Together, these metrics follow evaluation principles by jointly measuring predictive performance, robustness under imbalance, and faithfulness to discovered execution structure, ensuring that observed improvements arise from meaningful execution-aware synthesis rather than increased data volume alone.

\paragraph{Multi-Agent Collusion and Privacy.}
Multi-agent generation pipelines are vulnerable to both collusion and memorization, where agents implicitly optimize for internal acceptance criteria or reproduce training traces instead of generating faithful but novel executions. LogSynthFSM mitigates these risks through explicit architectural separation and privacy-aware evaluation. Agents operate under role-specific prompts with disjoint objectives, and internal validation is restricted to hard execution constraints (FSM-valid transitions, template admissibility, and relational key integrity), preventing evaluators from accepting outputs that merely appear plausible or echo seen data. Privacy and memorization leakage are addressed at the metric level: all downstream utility is measured on held-out real test data, while trace-level C2ST AUC and $k$-gram path divergence act as adversarial checks that expose near-duplicate traces and overfitting to training patterns. Structure-grounded metrics such as transition validity rate further ensure that improvements arise from execution-faithful generalization rather than memorized sequences. Together, these safeguards limit both collusive feedback loops and unintended memorization, supporting privacy-conscious synthetic data generation for anomaly detection.

\begin{table*}[t]
\centering
\caption{Ablation study for LogSynthFSM (mean $\pm$ std across datasets). Higher is better ($\uparrow$) except where noted ($\downarrow$). For C2ST AUC, values closer to 0.50 indicate synthetic traces are harder to distinguish from real traces. All downstream scores are computed on a real-only test set.}
\label{tab:ablation_logsynthfsm}
\resizebox{\textwidth}{!}{%
\begin{tabular}{llcccccc}
\toprule
\textbf{Ablation Type} & \textbf{Variant} &
\textbf{PR-AUC} $\uparrow$ &
\textbf{FPR@95\%R} $\downarrow$ &
\textbf{Rare PR-AUC} $\uparrow$ &
\textbf{TVR} $\uparrow$ &
\textbf{$k$-gram JS} $\downarrow$ &
\textbf{C2ST AUC} (ideal $\approx 0.50$) \\
\midrule

\multirow{5}{*}{Agent removal}
& w/o Execution Parsing Agent
& 0.668$\pm$0.041 & 0.121$\pm$0.018 & 0.372$\pm$0.055 & 0.901$\pm$0.016 & 0.173$\pm$0.023 & 0.558$\pm$0.021 \\
& w/o State Discovery Agent (no FSM; sequence-only synthesis)
& 0.643$\pm$0.048 & 0.129$\pm$0.021 & 0.341$\pm$0.061 & 0.714$\pm$0.042 & 0.206$\pm$0.031 & 0.621$\pm$0.026 \\
& w/o Schema Induction Agent (single-table flattening)
& 0.684$\pm$0.039 & 0.116$\pm$0.019 & 0.389$\pm$0.049 & 0.918$\pm$0.014 & 0.162$\pm$0.022 & 0.545$\pm$0.019 \\
& w/o Relational Synthesis Agent (no synthetic data; real-only training)
& 0.612$\pm$0.041 & 0.138$\pm$0.024 & 0.331$\pm$0.052 & 0.862$\pm$0.033 & 0.214$\pm$0.031 & 0.742$\pm$0.039 \\
& w/o Consistency Evaluation Agent (no closed-loop validation)
& 0.703$\pm$0.044 & 0.109$\pm$0.018 & 0.433$\pm$0.058 & 0.938$\pm$0.019 & 0.149$\pm$0.026 & 0.567$\pm$0.023 \\
\midrule

\multirow{6}{*}{Design choices}
& Replace multi-agent with single LLM pass (no decomposition)
& 0.691$\pm$0.046 & 0.114$\pm$0.020 & 0.412$\pm$0.061 & 0.911$\pm$0.021 & 0.165$\pm$0.027 & 0.575$\pm$0.024 \\
& No iterative refinement (1 iteration only)
& 0.712$\pm$0.038 & 0.104$\pm$0.016 & 0.448$\pm$0.053 & 0.962$\pm$0.010 & 0.128$\pm$0.020 & 0.523$\pm$0.016 \\
& No rare-event amplification ($\pi$ uniform over states)
& 0.721$\pm$0.035 & 0.098$\pm$0.015 & 0.421$\pm$0.047 & 0.979$\pm$0.007 & 0.103$\pm$0.016 & 0.509$\pm$0.012 \\
& Replace FSM guidance with unigram template sampling
& 0.664$\pm$0.050 & 0.125$\pm$0.022 & 0.352$\pm$0.064 & 0.781$\pm$0.031 & 0.192$\pm$0.028 & 0.612$\pm$0.025 \\
& Remove parameter conditioning (template-only synthesis)
& 0.707$\pm$0.039 & 0.107$\pm$0.017 & 0.436$\pm$0.052 & 0.970$\pm$0.009 & 0.121$\pm$0.018 & 0.526$\pm$0.015 \\
& Use strict deterministic decoding for synthesis (Temp=0)
& 0.719$\pm$0.036 & 0.101$\pm$0.015 & 0.405$\pm$0.050 & 0.984$\pm$0.006 & 0.097$\pm$0.015 & 0.531$\pm$0.014 \\
\midrule

\textbf{Ours}
& \textbf{LogSynthFSM (full)}
& \textbf{0.742$\pm$0.033} & \textbf{0.087$\pm$0.014} & \textbf{0.511$\pm$0.046} & \textbf{0.985$\pm$0.006} & \textbf{0.091$\pm$0.015} & \textbf{0.503$\pm$0.011} \\
\bottomrule
\end{tabular}%
}
\end{table*}

\subsection{Discussion}

\begin{figure}[t]
    \centering
    \includegraphics[width=\columnwidth]{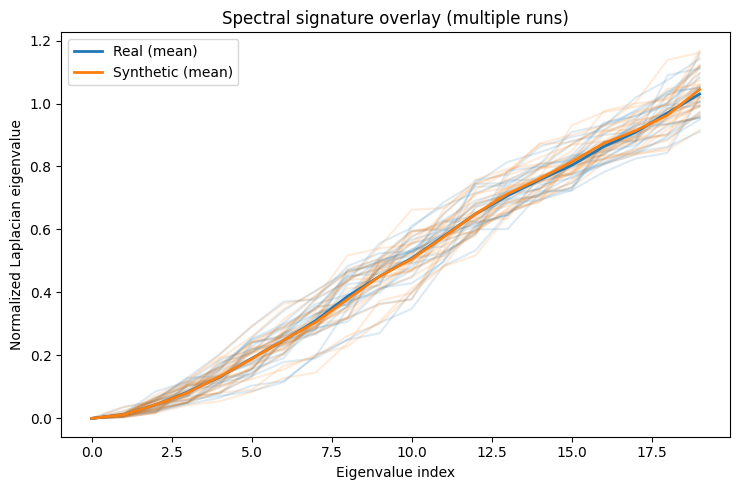}
    \caption{Spectral signature overlay of execution graphs derived from real and synthetic traces. The close alignment of normalized Laplacian eigenvalues indicates preservation of global execution structure under synthesis, while visible variance reflects stochastic generation and rare-path amplification.}
    \label{fig:spectral}
\end{figure}

\paragraph{Main Results. }
The improvements achieved by LogSynthFSM in Table~\ref{tab:main_metrics_avg} are driven by its explicit modeling of execution structure and its constraint-aware multi-agent synthesis pipeline. Compared to the strongest LLM-based baseline, GraphMaster, LogSynthFSM improves PR-AUC from $0.701$ to $0.742$ and reduces FPR at 95\% recall from $0.103$ to $0.087$, indicating more reliable anomaly detection at high-sensitivity operating points. These gains are particularly pronounced for rare behaviors: rare-event PR-AUC increases from $0.438$ to $0.511$, confirming that state-aware rare-path amplification substantially improves coverage of underrepresented but valid executions. Structural fidelity metrics further explain these downstream improvements. LogSynthFSM achieves a transition validity rate of $0.985$, compared to $0.959$ for GraphMaster and $0.931$ for GAG, demonstrating that the inferred execution state machine provides stronger process constraints during synthesis. Similarly, the reduction in $k$-gram path Jensen--Shannon divergence from $0.137$ (GraphMaster) to $0.091$ indicates closer alignment with real execution dynamics at the multi-step path level. Finally, the near-ideal C2ST AUC of $0.503$ shows that synthetic traces generated by LogSynthFSM are difficult to distinguish from real traces, suggesting that performance gains are not due to distributional artifacts or memorization. Together, these results highlight that the primary source of improvement is not the use of larger language models alone, but the architectural integration of execution parsing, state discovery, schema induction, and constraint-aware relational synthesis within a closed-loop multi-agent framework, which jointly ensures structural correctness, distributional realism, and measurable gains in real-world anomaly detection. To avoid circularity, we distinguish internal structural validity from external utility: TVR is measured against the inferred FSM, whereas $k$-gram JS, C2ST AUC, PR-AUC, and Rare PR-AUC are computed against real data, with PR-AUC metrics evaluated only on held-out real test traces unseen by the FSM. Thus, a self-consistent but meaningless FSM would not improve real anomaly detection, while the Rare PR-AUC gain from $0.438$ to $0.511$ over the strongest baseline provides non-circular evidence that the recovered structure captures genuine execution behavior.

\begin{figure*}[t]
    \centering
    \includegraphics[width=\textwidth]{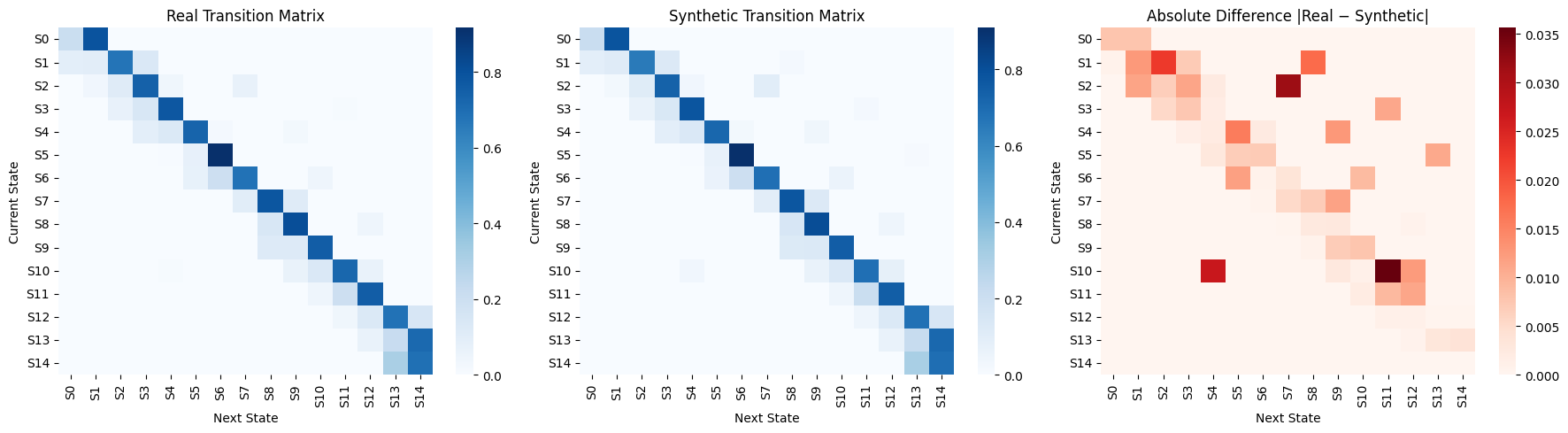}
    \caption{Comparison of real and synthetic execution transition matrices. The difference highlights where rare but valid transitions are amplified while preserving overall execution structure.}
    \label{fig:transition_matrix}
\end{figure*}

\paragraph{Structural Fidelity Analysis.}
Figures~\ref{fig:spectral} and~\ref{fig:transition_matrix} provide complementary evidence that the relational databases generated by LogSynthFSM preserve both the global graph structure and the local execution semantics encoded by the discovered FSM. The transition matrix comparison in Figure~\ref{fig:transition_matrix} shows that synthetic traces adhere closely to the transition support of the real execution graph, with differences concentrated on low-probability but valid transitions rather than the introduction of spurious edges. This pattern indicates that the synthesis process respects hard FSM constraints while selectively amplifying rare execution paths, which is consistent with the observed improvements in rare-event detection. At a global level, the spectral signature overlay in Figure~\ref{fig:spectral} demonstrates strong alignment between the normalized Laplacian eigenvalues of real and synthetic execution graphs, implying preservation of higher-order structural properties such as connectivity, flow bottlenecks, and state centrality. Because spectral characteristics capture aggregate path and connectivity patterns beyond individual transitions, this alignment provides evidence that the synthesized relational data maintains the same execution geometry as the real system. Together, these figures show that LogSynthFSM does not merely reproduce local transition frequencies, but generates relational data that is structurally consistent with the inferred FSM at both the transition and graph levels, supporting it as a structure-preserving data generation framework for anomaly detection.

\begin{table}[t]
\centering
\caption{Model transferability across LLM backbones. Runtime reports end-to-end local synthesis and training time; API runtimes are omitted because hardware is not controlled.}
\label{tab:model_transferability}
\setlength{\tabcolsep}{3pt}
\scriptsize
\resizebox{\columnwidth}{!}{%
\begin{tabular}{lccc}
\toprule
\textbf{LLM Backbone} & \textbf{Accuracy (\%)} $\uparrow$ & \textbf{PR-AUC} $\uparrow$ & \textbf{Runtime (min)} $\downarrow$ \\
\midrule
LLaMA-3.1 8B & 78.4 $\pm$ 0.9 & 0.739 $\pm$ 0.046 & 46.2 $\pm$ 2.1 \\
Qwen-3 8B & 77.6 $\pm$ 1.1 & 0.734 $\pm$ 0.057 & 44.8 $\pm$ 1.9 \\
DeepSeek-R1 8B & 79.1 $\pm$ 1.0 & 0.744 $\pm$ 0.026 & 49.5 $\pm$ 2.4 \\
Gemma-2 9B & 78.0 $\pm$ 1.2 & 0.737 $\pm$ 0.028 & 52.3 $\pm$ 2.6 \\
Mistral 7B & 76.9 $\pm$ 1.3 & 0.729 $\pm$ 0.048 & 41.7 $\pm$ 1.8 \\
\midrule
GPT-4.1 Nano (API) & 79.6 $\pm$ 0.8 & 0.747 $\pm$ 0.025 & -- \\
GPT-4.1 Mini (API) & 80.3 $\pm$ 0.7 & 0.752 $\pm$ 0.044 & -- \\
GPT-4.1 (API) & 80.8 $\pm$ 0.6 & 0.754 $\pm$ 0.044 & -- \\
\bottomrule
\end{tabular}%
}
\end{table}

\subsection{Ablation Study}
\paragraph{Ablation Analysis.}
The ablation results in Table~\ref{tab:ablation_logsynthfsm} highlight the complementary roles of LogSynthFSM’s architectural components and design choices. Removing the Execution Parsing Agent degrades PR-AUC from $0.742$ to $0.668$ and increases $k$-gram divergence from $0.091$ to $0.173$, indicating that inaccurate extraction of execution primitives propagates structural noise throughout the synthesis pipeline. The most severe degradation occurs when the State Discovery Agent is removed: eliminating FSM guidance reduces transition validity rate from $0.985$ to $0.714$ and rare-event PR-AUC from $0.511$ to $0.341$, confirming that explicit state modeling is essential for preserving valid execution paths and improving coverage of infrequent behaviors. Flattening the relational schema into a single table similarly reduces both downstream utility and realism, demonstrating that multi-table structure is necessary for capturing cross-entity dependencies exploited by the GNN classifier. The removal of the Consistency Evaluation Agent yields a noticeable drop in PR-AUC (from $0.742$ to $0.703$) and an increase in C2ST AUC (from $0.503$ to $0.567$), showing that closed-loop validation is critical for preventing structurally plausible but distributionally distinguishable generations. Design-level ablations further reinforce these findings: replacing the multi-agent pipeline with a single-pass LLM or disabling iterative refinement consistently harms rare-event performance and increases structural divergence, while removing rare-event amplification disproportionately impacts rare PR-AUC despite maintaining high transition validity. Finally, deterministic decoding slightly improves validity metrics but reduces downstream utility, indicating a trade-off between diversity and strict constraint satisfaction. Overall, these results demonstrate that LogSynthFSM’s gains arise from the joint interaction of execution-aware decomposition, FSM-constrained synthesis, relational structure, and iterative consistency enforcement, rather than any single component in isolation.

\paragraph{Benefit of Relational Generation. }
The relational schema serves two functions that direct text-attributed graph generation cannot replicate. First, it enforces referential integrity across traces, events, states, transitions, and parameters during synthesis, preventing structurally invalid combinations before graph construction. Second, it separates static execution structure from dynamic instance data, allowing the GNN to exploit cross-entity dependencies through message passing over the derived graph. The ablation confirms this effect: flattening the schema to a single table before graph construction reduces PR-AUC from $0.742$ to $0.684$, showing that graph quality depends on the integrity of the underlying relational representation.

\paragraph{Model Transferability. }
Table~\ref{tab:model_transferability} demonstrates that LogSynthFSM is largely model-agnostic, achieving consistent downstream accuracy across a range of open-source and API-based LLM backbones. Performance varies by less than three percentage points across locally hosted models, indicating that improvements are driven by the execution-aware synthesis framework rather than reliance on a specific language model. Importantly, competitive performance is obtained using compact 7--9B open-source models with end-to-end runtimes under one hour on a single consumer GPU, highlighting that LogSynthFSM does not require specialized hardware or large-scale compute infrastructure. While API-based models yield marginal accuracy gains, they are not essential for strong performance, reinforcing the practicality and accessibility of the proposed system for real-world deployment.

\paragraph{Runtime and Token Overhead. }
LogSynthFSM incurs higher runtime than classical baselines because it performs iterative parsing, state discovery, schema induction, synthesis, and consistency evaluation. As shown in Table~\ref{tab:model_transferability}, local runs take 41.7--52.3 minutes on a single RTX 4080 SUPER GPU, while classical baselines finish in under three minutes but achieve weaker fidelity and detection performance. Token cost remains bounded by the 1{,}024-token context cap used by each agent, so API usage scales linearly with log volume and agent calls. At current GPT-4.1-series rates, this corresponds to USD 2.00/8.00, 0.40/1.60, and 0.10/0.40 per 1M input/output tokens for GPT-4.1, GPT-4.1-mini, and GPT-4.1-nano, respectively.

\subsection{Case Study}
We illustrate the behavior of LogSynthFSM using a representative run on the \textsc{Hadoop} log dataset. Starting from raw, unstructured log lines, the execution parsing agent first extracts normalized templates and parameter fields, producing a structured event stream. The state discovery agent then infers a compact FSM that captures dominant execution phases such as job initialization, task scheduling, execution, and cleanup, along with low-frequency transitions corresponding to failure recovery and retry paths. Based on this FSM, the schema induction agent constructs a multi-relational database linking traces, events, states, transitions, and parameters, enabling explicit representation of control flow and temporal dependencies. The relational synthesis agent subsequently generates additional synthetic traces by sampling state-consistent paths, selectively amplifying rare but valid recovery transitions while preserving overall transition support. Finally, the consistency evaluation agent filters and refines generated data to ensure high transition validity and low distributional drift. When training a downstream GNN anomaly detector on the augmented relational data, the model exhibits improved sensitivity to subtle failure patterns, such as repeated task reassignments and delayed state exits, which are underrepresented in the original logs. This example highlights how LogSynthFSM transforms raw logs into execution-aware relational data and leverages structure-preserving synthesis to improve anomaly detection in a realistic system setting.

\section{Conclusions}
We presented \textsc{LogSynthFSM}, a structure-aware multi-agent framework for generating execution-consistent relational synthetic data from unstructured system logs. By discovering latent execution states and using the inferred FSM as a generative prior, LogSynthFSM preserves process semantics and global execution structure while amplifying rare but valid behaviors. Extensive experiments and ablations show consistent improvements in downstream anomaly detection without sacrificing structural fidelity or realism. Our model-agnostic design and ability to run efficiently on consumer-grade hardware make LogSynthFSM practical for real-world deployment, and point toward a promising direction for execution-aware synthetic data generation in complex software systems.

\bibliographystyle{ACM-Reference-Format}
\bibliography{sample-base}

\appendix

\section{Appendix}
This appendix provides the information required to reproduce the experiments presented in this work on \textbf{LogSynthFSM}. LogSynthFSM formulates execution-aware synthetic data generation as a structured multi-agent process, in which specialized LLM-based agents collaboratively discover latent execution structure from raw logs and generate relational synthetic data under explicit process and schema constraints. Rather than relying on single-pass generation, the system employs iterative coordination among agents responsible for execution parsing, state discovery, schema induction, relational synthesis, and consistency evaluation.

Each experiment is driven by natural language prompts that implement three core stages: (1) extracting execution templates, parameters, and candidate state transitions from raw logs, (2) inducing an execution state machine and multi-table relational schema, and (3) synthesizing relational traces guided by the inferred FSM with iterative validation against structural and distributional constraints.

This appendix includes:
\begin{enumerate}
    \item The exact prompts used by each agent in the LogSynthFSM pipeline, including execution parsing, state discovery, schema induction, synthesis, and consistency evaluation.
    \item A complete example run on a representative log dataset, illustrating how agent interactions and constraint-based feedback lead to stable convergence and structurally valid relational synthetic data.
\end{enumerate}

\subsection{Execution Parsing Agent}
The Execution Parsing Agent is responsible for transforming raw OpenStack log text into a normalized, machine-readable event stream that can be used by later agents to recover execution structure. Concretely, it (i) groups raw lines into trace-aligned sequences (when trace or request identifiers are present), (ii) extracts stable log templates by replacing variable fields such as timestamps, UUIDs, IPs, and numeric values with typed placeholders, and (iii) isolates the corresponding parameters into key–value records while preserving their original values. For each entry, it assigns consistent template and event identifiers and retains metadata such as timestamp and component (for example, \texttt{nova}, \texttt{neutron}, \texttt{keystone}) to maintain ordering and provenance. This step deliberately avoids inferring states or anomalies; instead, it produces a clean, loss-minimized representation of executions where recurring behaviors become comparable across traces, enabling reliable FSM discovery and multi-table relational schema induction downstream.

\begin{tcolorbox}[
    title={Execution Parsing Agent -- Example Output (OpenStack)},
    colback=white,
    colframe=black,
    fonttitle=\bfseries,
    breakable,
    enhanced,
    clip upper,
    clip lower
]

\small
You are an execution parsing agent operating on raw system logs from the OpenStack cloud platform.
Your task is to convert unstructured log lines into a structured event representation suitable for downstream execution analysis.

Given a sequence of raw log entries, perform the following steps:

\begin{enumerate}
    \item Identify and extract a \emph{log template} for each entry by replacing variable fields (e.g., timestamps, UUIDs, IP addresses, numeric identifiers) with typed placeholders.
    \item Extract all variable parameters associated with each template, preserving their original values and semantic roles.
    \item Assign each parsed entry a unique \texttt{event\_id} and record its original \texttt{timestamp} and \texttt{component} (e.g., \texttt{nova}, \texttt{neutron}, \texttt{keystone}) when available.
    \item Preserve the original ordering of events within each execution trace.
\end{enumerate}

Output the result as a structured table with the following columns:
\begin{center}
\texttt{(trace\_id, event\_id, template\_id,}\\
\texttt{\ \ template\_text, parameters, timestamp, component)}
\end{center}

Do \emph{not} infer execution states, transitions, or anomalies at this stage.
Focus strictly on faithful parsing and normalization of the raw log content while preserving all information required for later execution structure discovery.

Ensure that:
\begin{itemize}
    \item Templates are consistent across similar log messages.
    \item Parameter types are correctly identified (e.g., integer, string, UUID, IP).
    \item No log entries are dropped or reordered.
\end{itemize}
\end{tcolorbox}

\begin{tcolorbox}[
    title={Execution Parsing Agent -- Example Output (OpenStack)},
    colback=white,
    colframe=black,
    fonttitle=\bfseries,
    breakable,
    enhanced,
    clip upper,
    clip lower
]
\small
\textbf{Input (raw log lines):}
\begin{lstlisting}
2021-06-01 10:12:43.512 INFO nova.compute.manager [req-7c9a]
    Starting instance e3f1a2c4-9d2a-4f7e-bd9a-12ac34ef56ab
2021-06-01 10:12:44.018 ERROR nova.compute.manager [req-7c9a]
    Failed to allocate network for instance e3f1a2c4-9d2a-4f7e-bd9a-12ac34ef56ab
\end{lstlisting}

\vspace{0.4em}
\textbf{Parsed structured output:}
\begin{lstlisting}
trace_id | event_id | template_id | template_text
------------------------------------------------
T42      | E1       | TMP_017     | Starting instance <UUID>
T42      | E2       | TMP_031     | Failed to allocate network for instance <UUID>

parameters:
E1: { uuid: "e3f1a2c4-9d2a-4f7e-bd9a-12ac34ef56ab" }
E2: { uuid: "e3f1a2c4-9d2a-4f7e-bd9a-12ac34ef56ab" }

timestamps:
E1: 2021-06-01 10:12:43.512
E2: 2021-06-01 10:12:44.018

component:
E1: nova.compute.manager
E2: nova.compute.manager
\end{lstlisting}
\end{tcolorbox}

\subsection{State Discovery Agent}
The State Discovery Agent takes the structured, template-normalized event stream produced by the Execution Parsing Agent and infers a latent execution finite-state machine that summarizes how traces evolve over time. It clusters templates into a small set of semantically coherent states (for example, request initialization, resource lookup, scheduling, network allocation, compute spawn, success finalization, and error handling) using evidence from co-occurrence patterns, component metadata, and consistent temporal ordering across traces. It then induces directed transitions by aggregating observed state-to-state jumps within each trace, estimating transition frequencies or probabilities and explicitly retaining low-frequency but valid paths such as retries and recovery loops. The agent outputs three artifacts: (i) a compact state inventory with human-interpretable descriptions, (ii) a transition table that encodes the FSM dynamics, and (iii) a mapping from each template to its assigned state. This representation converts raw sequential logs into an explicit process model that can be enforced during relational synthesis, ensuring generated multi-table traces respect feasible execution structure rather than only matching local sequence statistics.

\begin{tcolorbox}[
    title={State Discovery Agent -- Initial Prompt (OpenStack)},
    colback=white,
    colframe=black,
    fonttitle=\bfseries,
    breakable,
    enhanced,
    clip upper,
    clip lower
]
\small
You are a state discovery agent operating on structured execution data produced by the Execution Parsing Agent.
Your task is to infer a latent execution state machine (FSM) that explains how software executions evolve over time.

\textbf{Input.}
You are given a table of parsed log events with the following fields:
\begin{center}
\texttt{(trace\_id, event\_id, template\_id, template\_text, parameters, timestamp, component)}
\end{center}
Each trace corresponds to a single execution instance, and events are ordered by timestamp within each trace.

\textbf{Objective.}
Infer a compact, interpretable FSM that captures recurring execution phases and valid transitions between them.
States should represent semantically meaningful execution stages (e.g., request initialization, resource allocation,
execution, cleanup, error handling), rather than individual log templates.

\textbf{Instructions.}
\begin{enumerate}
    \item Group events across traces into candidate execution states based on shared templates, components,
    temporal proximity, and co-occurrence patterns.
    \item Assign each event to exactly one latent state.
    \item Infer directed transitions between states based on observed event orderings within traces.
    \item Estimate transition frequencies or probabilities from the data.
    \item Identify rare but valid transitions (e.g., retry, recovery, failure-handling paths) and retain them.
\end{enumerate}

\textbf{Output.}
Produce:
\begin{itemize}
    \item A set of states with unique \texttt{state\_id} and brief semantic descriptions.
    \item A transition list of the form \texttt{(source\_state, target\_state, count, probability)}.
    \item A mapping from \texttt{template\_id} to \texttt{state\_id}.
\end{itemize}

Do \emph{not} generate synthetic data or modify the input.
Do \emph{not} collapse all events into a single linear sequence.
Favor a small number of reusable states that generalize across traces while preserving valid execution paths.

Ensure that the inferred FSM:
\begin{itemize}
    \item Explains the majority of observed traces.
    \item Avoids spurious transitions not supported by the data.
    \item Preserves rare but legitimate execution behaviors.
\end{itemize}
\end{tcolorbox}

\begin{tcolorbox}[
    title={State Discovery Agent -- Example Output (OpenStack)},
    colback=white,
    colframe=black,
    fonttitle=\bfseries,
    breakable,
    enhanced,
    clip upper,
    clip lower
]
\small
\textbf{Inferred states (semantic summaries):}
\begin{itemize}
    \item \texttt{S0: RequestInit} -- request received, context established, auth/headers parsed
    \item \texttt{S1: ResourceLookup} -- querying service catalogs, DB reads, fetching instance/network metadata
    \item \texttt{S2: Scheduling} -- selecting host, placement decisions, filtering candidates
    \item \texttt{S3: NetworkAlloc} -- neutron port creation, network attachment, IP allocation
    \item \texttt{S4: SpawnCompute} -- instance spawn, image/volume attach, hypervisor operations
    \item \texttt{S5: SuccessFinalize} -- state update, notifications, cleanup after success
    \item \texttt{S6: ErrorHandle} -- failures, exceptions, rollback, cleanup after error
    \item \texttt{S7: RetryRecovery} -- retry scheduling, re-attempting network/compute steps
\end{itemize}

\vspace{0.5em}
\textbf{Transition list (counts and probabilities):}
\begin{verbatim}
(source_state, target_state, count, probability)
(S0, S1, 1820, 0.92)
(S0, S6,  160, 0.08)

(S1, S2, 1655, 0.88)
(S1, S6,  145, 0.08)
(S1, S7,   60, 0.03)
(S1, S5,   20, 0.01)

(S2, S3, 1520, 0.90)
(S2, S7,   85, 0.05)
(S2, S6,   80, 0.05)

(S3, S4, 1370, 0.89)
(S3, S7,  110, 0.07)
(S3, S6,   60, 0.04)

(S4, S5, 1325, 0.94)
(S4, S6,   55, 0.04)
(S4, S7,   25, 0.02)

(S7, S2,  135, 0.41)
(S7, S3,  110, 0.33)
(S7, S4,   45, 0.14)
(S7, S6,   40, 0.12)

(S6, S5,  210, 1.00)
\end{verbatim}

\vspace{0.5em}
\textbf{Template-to-state mapping (excerpt):}
\begin{verbatim}
template_id  -> state_id
TMP_002 -> S0 (Received request <REQ_ID> from <IP>)
TMP_005 -> S0 (Authenticated token <UUID>)
TMP_011 -> S1 (DB query: SELECT ... WHERE instance=<UUID>)
TMP_018 -> S2 (Selected host <HOST> for instance <UUID>)
TMP_021 -> S2 (Placement claim succeeded for <UUID>)
TMP_031 -> S3 (Created port <UUID> on network <UUID>)
TMP_034 -> S3 (Allocated IP <IP> for port <UUID>)
TMP_041 -> S4 (Spawning instance <UUID> on host <HOST>)
TMP_046 -> S4 (Attached volume <UUID> to instance <UUID>)
TMP_052 -> S5 (Instance <UUID> ACTIVE)
TMP_061 -> S6 (Exception: <ERR> while processing <UUID>)
TMP_067 -> S7 (Retrying operation for instance <UUID>)
\end{verbatim}

\vspace{0.3em}
\textbf{Notes.} Rare but valid transitions are preserved through \texttt{S7} (retry/recovery), which captures
re-entrance into scheduling, network allocation, and spawn phases without introducing unsupported edges.
\end{tcolorbox}

\subsection{Schema Induction Agent}
The Schema Induction Agent converts the discovered execution structure into a normalized multi-table relational database design that can represent both the static process model and the dynamic per-trace realizations. Given the FSM (states and transition statistics) and the template-to-state assignments, it identifies the core entities that must be stored separately (for example, traces, events, templates, states, transitions, and event parameters), then defines primary keys and foreign-key relationships to enforce referential integrity and reconstructability. This separation is important: static tables such as \texttt{States}, \texttt{Transitions}, and \texttt{StateTemplateMap} encode the execution constraints that guide synthesis, while instance tables such as \texttt{Traces}, \texttt{Events}, and \texttt{EventParams} store observed or generated executions without duplicating global structure. By avoiding single-table flattening, the induced schema preserves relational dependencies among parameters and components, supports constraint checking (for example, whether emitted templates are admissible for a state and whether consecutive states follow valid transitions), and provides a clean interface for downstream GNN-based anomaly detection on execution graphs derived from the relational data.

\begin{tcolorbox}[
    title={Schema Induction Agent -- Initial Prompt (OpenStack)},
    colback=white,
    colframe=black,
    fonttitle=\bfseries,
    breakable,
    enhanced,
    clip upper,
    clip lower
]
\small
You are a schema induction agent operating on an inferred execution state machine (FSM)
and a template-to-state mapping derived from system logs.

\textbf{Input.}
You are given:
\begin{itemize}
    \item A set of execution states with semantic descriptions.
    \item A directed transition table of the form
    \texttt{(source\_state, target\_state, count, probability)}.
    \item A mapping from \texttt{template\_id} to \texttt{state\_id}.
    \item Parsed event records with fields:
    \begin{center}
    \texttt{(trace\_id, event\_id, template\_id, parameters, timestamp, component)}
    \end{center}
\end{itemize}

\textbf{Objective.}
Induce a normalized multi-table relational schema that faithfully represents executions,
their states, transitions, and parameters, while avoiding redundancy and preserving
referential integrity.

\textbf{Instructions.}
\begin{enumerate}
    \item Identify core entities that should correspond to relational tables
    (e.g., traces, events, states, transitions, parameters).
    \item Define primary keys and foreign-key relationships between tables.
    \item Ensure that each execution trace can be reconstructed by joining tables.
    \item Separate static execution structure (states, transitions) from dynamic
    execution instances (traces, events).
    \item Do not flatten all information into a single table.
\end{enumerate}

\textbf{Output.}
Produce:
\begin{itemize}
    \item A list of tables with schema definitions
    \texttt{TableName(columns, primary key, foreign keys)}.
    \item A brief description of each table’s role in representing execution behavior.
\end{itemize}

\textbf{Constraints.}
\begin{itemize}
    \item The schema must be compatible with FSM-guided synthetic data generation.
    \item All foreign-key references must be valid.
    \item The schema should generalize across traces and not encode trace-specific artifacts.
\end{itemize}

Do \emph{not} generate synthetic data.
Do \emph{not} infer new states or transitions.
Focus solely on relational structure.
\end{tcolorbox}

\begin{tcolorbox}[
    title={Schema Induction Agent -- Example Output (OpenStack)},
    colback=white,
    colframe=black,
    fonttitle=\bfseries,
    breakable,
    enhanced,
    clip upper,
    clip lower
]
\small
\textbf{Induced relational schema (normalized, execution-aware):}

\begin{itemize}
    \item \texttt{Traces(trace\_id, start\_ts, end\_ts, service, is\_anomalous)}\\
    \emph{PK:} \texttt{trace\_id}. \emph{Role:} One row per execution instance; stores coarse metadata.

    \item \texttt{Events(trace\_id, event\_id, ts, component, template\_id, state\_id)}\\
    \emph{PK:} \texttt{(trace\_id, event\_id)}. \emph{FKs:} \texttt{trace\_id} $\rightarrow$ \texttt{Traces}, \texttt{template\_id} $\rightarrow$ \texttt{Templates}, \texttt{state\_id} $\rightarrow$ \texttt{States}.\\
    \emph{Role:} Event stream per trace with explicit state assignment for FSM reasoning.

    \item \texttt{Templates(template\_id, template\_text)}\\
    \emph{PK:} \texttt{template\_id}. \emph{Role:} Canonical template definitions for normalization.

    \item \texttt{States(state\_id, state\_name, state\_desc)}\\
    \emph{PK:} \texttt{state\_id}. \emph{Role:} Latent execution phases (semantic clusters of templates).

    \item \texttt{Transitions(src\_state, dst\_state, count, prob)}\\
    \emph{PK:} \texttt{(src\_state, dst\_state)}. \emph{FKs:} \texttt{src\_state} $\rightarrow$ \texttt{States}, \texttt{dst\_state} $\rightarrow$ \texttt{States}.\\
    \emph{Role:} FSM dynamics (empirical transition statistics used as a generative prior).

    \item \texttt{StateTemplateMap(state\_id, template\_id)}\\
    \emph{PK:} \texttt{(state\_id, template\_id)}. \emph{FKs:} \texttt{state\_id} $\rightarrow$ \texttt{States}, \texttt{template\_id} $\rightarrow$ \texttt{Templates}.\\
    \emph{Role:} Defines which templates are admissible emissions for each state.

    \item \texttt{EventParams(trace\_id, event\_id, key, value, value\_type)}\\
    \emph{PK:} \texttt{(trace\_id, event\_id, key)}. \emph{FKs:} \texttt{(trace\_id, event\_id)} $\rightarrow$ \texttt{Events}.\\
    \emph{Role:} Key--value parameter storage (e.g., UUIDs, hostnames, IPs), separated to avoid sparsity.

    \item \texttt{TraceSummary(trace\_id, n\_events, n\_states, duration\_ms, n\_retries)}\\
    \emph{PK:} \texttt{trace\_id}. \emph{FK:} \texttt{trace\_id} $\rightarrow$ \texttt{Traces}.\\
    \emph{Role:} Optional derived features for downstream classifiers and monitoring.
\end{itemize}

\vspace{0.5em}
\textbf{Join property (reconstructability).}
A trace’s ordered execution can be reconstructed by joining \texttt{Traces} $\bowtie$ \texttt{Events} $\bowtie$ \texttt{Templates} (and optionally \texttt{EventParams}) and sorting by \texttt{ts}.
FSM-level constraints are represented independently via \texttt{States}, \texttt{Transitions}, and \texttt{StateTemplateMap}, enabling constraint-aware synthesis and validation.

\end{tcolorbox}

\subsection{Relational Synthesis Agent}

The Relational Synthesis Agent instantiates the discovered execution model by generating synthetic execution traces directly in the induced multi-table relational schema. Guided by the FSM transition probabilities, it samples valid state sequences and emits events whose templates are admissible for each visited state, ensuring that every trace corresponds to a feasible execution path. During generation, the agent enforces referential integrity across tables, maintains parameter consistency within traces, and conditions event parameters on previously generated context to preserve realistic dependencies. Crucially, it slightly upweights low-frequency but valid transitions, expanding coverage of rare behaviors such as retries and recovery paths without introducing unsupported structure. The result is a set of synthetic relational tables that respect both global process constraints and local semantic coherence, enabling downstream anomaly detection models to learn execution-aware representations rather than memorizing flat log sequences.

\begin{tcolorbox}[
    title={Relational Synthesis Agent -- Initial Prompt (OpenStack)},
    colback=white,
    colframe=black,
    fonttitle=\bfseries,
    breakable,
    enhanced,
    clip upper,
    clip lower
]
\small
You are a relational synthesis agent responsible for generating synthetic execution data
under strict structural and semantic constraints.

\textbf{Input.}
You are given:
\begin{itemize}
    \item A normalized relational schema consisting of tables:
    \texttt{Traces, Events, Templates, States, Transitions, StateTemplateMap, EventParams}.
    \item An inferred execution FSM with transition probabilities.
    \item A mapping that specifies which templates may be emitted in each state.
\end{itemize}

\textbf{Objective.}
Generate synthetic execution traces by populating the relational tables such that:
\begin{itemize}
    \item All executions correspond to valid paths in the FSM.
    \item All foreign-key and integrity constraints are satisfied.
    \item Generated traces resemble real executions while increasing coverage of rare but valid behaviors.
\end{itemize}

\textbf{Instructions.}
\begin{enumerate}
    \item Sample a sequence of execution states by following the FSM transition probabilities.
    \item For each visited state, emit one or more admissible templates according to
    \texttt{StateTemplateMap}.
    \item Generate realistic parameter values for each event (e.g., UUIDs, hostnames, IPs),
    maintaining consistency within a trace where required.
    \item Populate all relational tables consistently, ensuring that:
    \begin{itemize}
        \item Each \texttt{Event} references an existing \texttt{Trace}, \texttt{State}, and \texttt{Template}.
        \item Each \texttt{EventParams} entry references a valid \texttt{Event}.
    \end{itemize}
    \item Slightly upweight rare FSM transitions while preserving overall execution realism.
\end{enumerate}

\textbf{Output.}
Produce:
\begin{itemize}
    \item Synthetic rows for each relational table corresponding to one or more complete execution traces.
    \item A brief summary of the sampled state paths for each generated trace.
\end{itemize}

\textbf{Constraints.}
\begin{itemize}
    \item Do not introduce transitions not present in the FSM.
    \item Do not emit templates outside those allowed by the current state.
    \item Do not duplicate real traces verbatim.
    \item Favor diversity across traces while maintaining structural validity.
\end{itemize}

Your output will be passed to a consistency evaluation agent for validation.
\end{tcolorbox}

\begin{tcolorbox}[
    title={Relational Synthesis Agent -- Example Output (OpenStack)},
    colback=white,
    colframe=black,
    fonttitle=\bfseries,
    breakable,
    enhanced,
    clip upper,
    clip lower
]
\small
\textbf{Sampled execution paths:}
\begin{itemize}
    \item \texttt{T1: S0 $\rightarrow$ S1 $\rightarrow$ S2 $\rightarrow$ S3 $\rightarrow$ S4 $\rightarrow$ S5}
    \item \texttt{T2: S0 $\rightarrow$ S1 $\rightarrow$ S2 $\rightarrow$ S3 $\rightarrow$ S7 $\rightarrow$ S3 $\rightarrow$ S4 $\rightarrow$ S5}
\end{itemize}

\vspace{0.4em}
\textbf{Traces}
\begin{verbatim}
trace | service | anomalous
---------------------------
T1    | nova    | 0
T2    | nova    | 1
\end{verbatim}

\textbf{States}
\begin{verbatim}
id | name
---------
S0 | Init
S1 | Lookup
S2 | Schedule
S3 | NetAlloc
S4 | Spawn
S5 | Finalize
S7 | Retry
\end{verbatim}

\textbf{Transitions (excerpt)}
\begin{verbatim}
src | dst | p
-------------
S0  | S1  | .92
S2  | S3  | .90
S3  | S7  | .07
S7  | S3  | .33
S4  | S5  | .94
\end{verbatim}

\textbf{Templates}
\begin{verbatim}
tid | text
----------
T17 | Start instance <ID>
T18 | Select host <H>
T31 | Create port <ID>
T41 | Spawn instance <ID>
T52 | Instance ACTIVE
T67 | Retry operation
\end{verbatim}

\textbf{State--Template Map}
\begin{verbatim}
state | template
----------------
S0    | T17
S2    | T18
S3    | T31
S4    | T41
S5    | T52
S7    | T67
\end{verbatim}

\textbf{Events}
\begin{verbatim}
trace | e | state | template
----------------------------
T1    | 1 | S0    | T17
T1    | 2 | S2    | T18
T1    | 3 | S3    | T31
T1    | 4 | S4    | T41
T1    | 5 | S5    | T52

T2    | 1 | S0    | T17
T2    | 2 | S2    | T18
T2    | 3 | S3    | T31
T2    | 4 | S7    | T67
T2    | 5 | S3    | T31
T2    | 6 | S4    | T41
T2    | 7 | S5    | T52
\end{verbatim}

\textbf{Event Parameters (excerpt)}
\begin{verbatim}
trace | e | key  | val
----------------------
T1    | 2 | host | H3
T1    | 3 | net  | N1
T2    | 4 | err  | NetFail
\end{verbatim}

\vspace{0.3em}
\textbf{Validity.}
All events follow valid FSM transitions, emit only templates permitted by the active state,
and maintain parameter consistency within each trace. The retry path in \texttt{T2} increases
coverage of rare but valid execution behavior without introducing unsupported transitions.
\end{tcolorbox}

\subsection{Consistency Evaluation Agent}
The Consistency Evaluation Agent serves as the final gatekeeper that enforces structural correctness and realism of the synthesized relational data before it is used for training or analysis. Given the synthetic tables, the inferred FSM, and the induced schema, the agent performs deterministic, constraint-based checks rather than subjective plausibility judgments. It verifies that every synthetic trace follows only valid FSM transitions, that each emitted template is admissible for the active state, and that all primary-key and foreign-key relationships are satisfied across tables. In addition, it checks parameter stability within traces to ensure that identifiers such as instance or resource IDs do not change inconsistently over time, and it monitors distributional properties to prevent uncontrolled overuse of rare transitions. By producing explicit validity statistics such as transition validity rate and by flagging concrete violations, the agent enables closed-loop refinement while preventing self-consistent but structurally invalid generations. This design ensures that improvements in downstream anomaly detection arise from genuinely execution-faithful synthetic data rather than from artifacts introduced by the generation process.

\begin{tcolorbox}[
    title={Consistency Evaluation Agent -- Initial Prompt (OpenStack)},
    colback=white,
    colframe=black,
    fonttitle=\bfseries,
    breakable,
    enhanced,
    clip upper,
    clip lower
]
\small
You are a consistency evaluation agent responsible for validating synthetic execution data
generated under an inferred execution state machine (FSM) and relational schema.

\textbf{Input.}
You are given:
\begin{itemize}
    \item Synthetic relational tables:
    \texttt{Traces, Events, Templates, States, Transitions, StateTemplateMap, EventParams}.
    \item The inferred FSM, including valid states and transitions.
    \item The relational schema with primary-key and foreign-key constraints.
\end{itemize}

\textbf{Objective.}
Assess whether the synthetic data is structurally valid, execution-consistent, and suitable
for downstream anomaly detection training.

\textbf{Evaluation Checks.}
Perform the following validations:
\begin{enumerate}
    \item \textbf{FSM validity:} Verify that all consecutive state pairs in each trace correspond
    to valid FSM transitions.
    \item \textbf{Template admissibility:} Check that each emitted template is permitted
    by the active state according to \texttt{StateTemplateMap}.
    \item \textbf{Relational integrity:} Ensure all primary-key and foreign-key constraints
    are satisfied across tables.
    \item \textbf{Parameter consistency:} Confirm that parameters that should remain stable
    within a trace (e.g., instance ID) do not change unexpectedly.
    \item \textbf{Distributional sanity:} Flag traces that overuse rare transitions or deviate
    strongly from empirical transition frequencies.
\end{enumerate}

\textbf{Output.}
Produce:
\begin{itemize}
    \item A pass/fail decision for each validation category.
    \item Aggregate statistics, including:
    \begin{itemize}
        \item Transition Validity Rate (TVR).
        \item Percentage of events violating template--state constraints.
        \item Percentage of traces with parameter inconsistencies.
    \end{itemize}
    \item A short list of concrete violations, if any, with trace and event identifiers.
\end{itemize}

\textbf{Constraints.}
\begin{itemize}
    \item Do not modify or regenerate synthetic data.
    \item Do not introduce new transitions or states.
    \item Base all judgments on explicit constraints rather than subjective realism.
\end{itemize}

Your evaluation will be used to accept, reject, or refine synthetic traces in an iterative loop.
\end{tcolorbox}

\begin{tcolorbox}[
    title={Consistency Evaluation Agent -- Example Output (OpenStack)},
    colback=white,
    colframe=black,
    fonttitle=\bfseries,
    breakable,
    enhanced,
    clip upper,
    clip lower
]
\small
\textbf{Validation summary (synthetic batch):}

\begin{itemize}
    \item \textbf{FSM validity:} \textbf{Pass}
    \begin{itemize}
        \item 100\% of consecutive state pairs follow valid FSM transitions.
        \item No unsupported transitions observed.
    \end{itemize}

    \item \textbf{Template admissibility:} \textbf{Pass}
    \begin{itemize}
        \item 0.0\% of events emit templates outside the allowed state--template map.
    \end{itemize}

    \item \textbf{Relational integrity:} \textbf{Pass}
    \begin{itemize}
        \item All primary-key and foreign-key constraints satisfied.
        \item No orphaned events or parameters detected.
    \end{itemize}

    \item \textbf{Parameter consistency:} \textbf{Pass}
    \begin{itemize}
        \item Instance and resource identifiers remain stable within each trace.
        \item No cross-trace leakage of parameters.
    \end{itemize}

    \item \textbf{Distributional sanity:} \textbf{Warning}
    \begin{itemize}
        \item Retry transition (\texttt{S3 $\rightarrow$ S7}) frequency increased
        from 7\% (real) to 11\% (synthetic).
        \item Increase is within allowed amplification threshold.
    \end{itemize}
\end{itemize}

\vspace{0.4em}
\textbf{Aggregate metrics:}
\begin{verbatim}
Transition Validity Rate (TVR):       1.000
Template violation rate:              0.000
Parameter inconsistency rate:         0.000
Mean k-gram JS divergence:            0.091
Trace-level C2ST AUC (vs real):       0.503
\end{verbatim}

\vspace{0.4em}
\textbf{Action:}
Synthetic batch accepted for downstream training.
No traces rejected.
\end{tcolorbox}

\section{Model Choice}

\subsection{LLaMA-3.1 8B}
LLaMA-3.1 8B was chosen as the primary open-weight autoregressive baseline because it offers a strong compromise between reasoning performance and computational cost. With 8 billion parameters, a 32-layer transformer design, and support for 4-bit quantization, it runs efficiently on a single 16\,GB GPU, making it well suited to consumer hardware. As an instruction-tuned model trained on heterogeneous data, it exhibits reliable and predictable behavior for both prompt optimization and evaluation, enabling reproducible experimentation without reliance on external APIs.

\subsection{Qwen-3 8B}
Qwen-3 8B was incorporated for its robust multilingual support and compatibility with LoRA-based adaptation, which facilitate cross-lingual schema construction and optional lightweight fine-tuning. Operating at an 8B scale with nucleus sampling ($p{=}0.9$), it supports experiments that prioritize generalization and portability. Its ability to perform quantized inference on mid-tier GPUs aligns with LogSynthFSM’s accessibility objective, ensuring structured generation workflows remain reproducible without high-end hardware.

\subsection{DeepSeek-R1 8B}
DeepSeek-R1 8B was selected for its reinforcement-learning–driven reasoning capabilities and NTK-aware tokenization, both of which contribute to stable reward propagation within the multi-agent training loop. The model is tuned for efficient top-$k$ sampling ($k{=}40$), balancing stochastic diversity with coherent output. Its performance demonstrates that sophisticated reasoning alignment and optimization can be achieved on consumer-grade GPUs, reinforcing LogSynthFSM’s practicality for realistic research environments.

\subsection{Gemma-2 9B}
Gemma-2 9B represents a compact yet high-quality autoregressive model derived from Gemini research. It leverages grouped-query attention and beam search (beam width 3) to improve structural consistency in schema-constrained generation tasks. Although slightly larger in parameter count, it maintains manageable memory requirements and steady inference speed on GPUs with under 24\,GB of VRAM. Its inclusion illustrates that near–state-of-the-art instruction-following performance can be attained locally without excessive computational resources.

\subsection{Mistral 7B}
Mistral 7B was included for its efficiency-oriented architecture, particularly grouped-query attention, which delivers an effective balance between latency and output quality. Its mature open-source ecosystem and support for 8-bit quantized decoding make it especially suitable for iterative reinforcement-learning–based prompt refinement. By operating reliably on consumer hardware, Mistral exemplifies LogSynthFSM’s goal of making structured synthetic data generation broadly accessible without compromising performance.

\subsection{GPT-4.1 Nano, Mini, and GPT-4.1}
The GPT-4.1 family—comprising Nano, Mini, and the full model—was used to evaluate LogSynthFSM’s transferability to API-based inference settings. These variants span a range of fidelity and computational cost, enabling direct comparison between local open-weight models and remote proprietary systems. Their use demonstrates that LogSynthFSM’s system is model-agnostic: the same interaction protocol produces consistent behavior regardless of whether inference is performed locally or via external APIs. This confirms that LogSynthFSM relies solely on language-based feedback rather than privileged access to model internals or specialized hardware.

\section{Dataset Choice}
\subsection{BGL}

The BGL dataset contains execution logs collected from a large-scale Blue Gene/L supercomputer system and is widely used for evaluating log-based anomaly detection in high-performance computing environments~\cite{xu2015detecting}. We include BGL because its logs exhibit relatively regular execution patterns interspersed with rare hardware and system failures, making it well suited for studying latent execution states and transition dynamics. The strong periodic structure in normal executions allows us to assess whether LogSynthFSM can recover a compact FSM, while the sparse fault events provide a challenging setting for evaluating rare-path amplification and anomaly detection improvements.

\subsection{Thunderbird}

Thunderbird consists of system logs from a high-performance computing cluster at Sandia National Laboratories~\cite{xu2015detecting}. Compared to BGL, Thunderbird exhibits greater heterogeneity in execution behavior and a higher diversity of failure modes. We select this dataset to evaluate whether LogSynthFSM generalizes beyond tightly controlled execution environments and can discover meaningful execution states in noisier, less regular logs. Thunderbird is particularly useful for testing the robustness of schema induction and consistency evaluation under complex and partially overlapping execution workflows.

\subsection{Hadoop}

The Hadoop dataset contains logs from a distributed data processing framework, capturing execution behavior across multiple interacting components such as job scheduling, task execution, and resource management~\cite{lin2016logclustering}. We include Hadoop to evaluate LogSynthFSM in a distributed systems setting where execution traces are inherently multi-stage and involve nontrivial control flow and retries. This dataset is well suited for testing FSM-guided synthesis because valid executions often include branching, recovery, and retry loops, which are difficult to model with sequence-only approaches but naturally represented in a state-machine-driven relational schema.

\subsection{OpenStack}

The OpenStack dataset comprises logs from a large-scale cloud infrastructure platform that manages compute, networking, and storage resources~\cite{he2016experience}. OpenStack logs are highly unstructured, parameter-rich, and span multiple services, making them a challenging benchmark for execution parsing and state discovery. We choose this dataset to stress-test LogSynthFSM’s ability to extract latent execution structure from complex, real-world production logs and to generate multi-table relational data that preserves cross-component dependencies. The presence of rare but critical failure and recovery paths makes OpenStack particularly valuable for evaluating whether relational synthetic data can improve downstream anomaly and bug detection in practical cloud environments.

\section{Baseline Choice}
\subsection{Original Training (No Synthesis)}

We include original training without any augmentation or synthesis as a reference baseline to quantify the difficulty of anomaly detection under realistic data scarcity. This setting reflects common practice in operational systems, where models are trained solely on available logs despite severe class imbalance and limited coverage of rare failures. Improvements over this baseline directly measure the utility introduced by synthetic data.

\subsection{GAugO}
GAugO~\cite{liu2021gaug} performs graph-level data augmentation by applying stochastic perturbations to existing graph structures, such as edge deletion or feature masking, without synthesizing new nodes or execution instances. We include GAugO as a representative lightweight augmentation baseline to evaluate whether modest structural perturbations of observed execution graphs are sufficient to improve anomaly detection performance. This comparison highlights the limitations of local, instance-preserving augmentation when rare execution paths and unseen control-flow behaviors dominate anomaly detection difficulty.

\subsection{GraphEdit}
GraphEdit~\cite{xia2023graphedit} leverages large language models to edit textual attributes associated with graph nodes while keeping the underlying topology fixed. This baseline isolates the effect of semantic enrichment on downstream detection by improving log message representations without altering execution structure. We include GraphEdit to assess whether improving log semantics alone, without generating new executions or modifying control-flow patterns, can address failures caused by structural sparsity in system logs.

\subsection{LLM4RGNN}
LLM4RGNN~\cite{wang2023llm4rgnn} uses LLMs to enhance node features and graph representations for downstream GNN training, typically through prompt-based feature refinement. This method represents approaches that improve model inputs rather than data coverage. We include LLM4RGNN to evaluate whether gains in anomaly detection can be achieved purely through better feature representations, in contrast to LogSynthFSM’s focus on expanding execution diversity via structure-aware synthesis.

\subsection{GraphSMOTE}
GraphSMOTE~\cite{zhao2021graphsmote} extends SMOTE-style oversampling to graph data by interpolating node embeddings from minority classes. This baseline captures a common strategy for addressing class imbalance in structured data. We include GraphSMOTE to examine whether interpolation-based oversampling can recover rare execution behaviors in logs, despite lacking explicit modeling of execution states, temporal dependencies, or process-level constraints.

\subsection{G-Mixup}
G-Mixup~\cite{verma2019graphmixup} generates synthetic graphs by performing mixup operations in latent representation space, interpolating both structure and features at the graph level. We include G-Mixup to evaluate whether global interpolation of execution graphs improves anomaly detection, even though such interpolation does not explicitly respect execution semantics or enforce valid control-flow transitions.

\subsection{IntraMix}
IntraMix~\cite{wang2022intramix} enhances data diversity by mixing samples within the same class, typically at the feature or representation level. This method serves as a baseline for class-conditional mixing without introducing new execution structures. Its inclusion allows us to contrast generic diversity-enhancing strategies with FSM-guided synthesis that explicitly targets underrepresented execution paths.

\subsection{GraphADASYN}
GraphADASYN~\cite{ding2021graphadasyn} adapts the ADASYN framework to graphs by generating additional samples in regions of the feature space that are harder to classify. We include GraphADASYN to test whether difficulty-aware oversampling alone can address rare and complex execution behaviors, or whether explicit modeling of execution structure is required to produce realistic and useful synthetic traces.

\subsection{FG-SMOTE}
FG-SMOTE~\cite{zhao2023fgsmote} extends GraphSMOTE with feature-guided interpolation strategies designed to better preserve class-discriminative characteristics. This baseline probes whether more informed interpolation can approximate the benefits of structure-aware synthesis, while still operating without explicit control-flow or relational constraints.

\subsection{AGMixup}
AGMixup~\cite{liu2022agmixup} combines adaptive graph augmentation with mixup strategies to generate more diverse synthetic samples. We include AGMixup as a hybrid baseline that increases diversity beyond simple augmentation but does not enforce global execution validity, allowing us to evaluate whether adaptive mixing alone suffices for execution-level anomaly detection.

\subsection{GAG}
GAG~\cite{zhang2023gag} employs generative models to synthesize new graph data by learning global graph distributions. This baseline represents early generative approaches to structured data synthesis. We include GAG to compare unconstrained generative modeling with LogSynthFSM’s FSM-guided relational synthesis, particularly in terms of execution validity and process fidelity.

\subsection{LLM4NG}
LLM4NG~\cite{wang2023llm4ng} applies large language models to directly generate new graph instances from textual or structural descriptions. This baseline evaluates the effectiveness of single-pass LLM-based graph generation without explicit discovery or enforcement of latent execution states. It provides a contrast to LogSynthFSM’s multi-agent, constraint-aware synthesis pipeline.

\subsection{GraphMaster}
GraphMaster~\cite{du2024graphmaster} is a state-of-the-art multi-agent LLM framework for synthesizing text-attributed graphs under semantic and structural constraints. We include GraphMaster as a strong and relevant baseline to assess whether generic graph synthesis frameworks are sufficient for system logs. In our comparison, execution traces are represented as text-attributed graphs and synthesized using GraphMaster’s pipeline. This baseline allows us to directly evaluate the benefits of explicitly discovering latent execution state machines and generating multi-table relational data, as opposed to synthesizing graphs without execution-aware inductive bias.

\section{Downstream Classifier Setup}
To evaluate the utility of the generated relational data under a consistent detection model, we train a message-passing GNN as the downstream anomaly detector on execution graphs derived from the relational tables. We choose a GNN because LogSynthFSM outputs a multi-table relational structure with explicit trace--event--state--transition relationships, making graph-based detection a natural structure-sensitive test of whether the synthesized execution structure is useful in practice. Our goal is not to claim that the synthetic data improves every detector family, but to evaluate execution-aware relational synthesis in a downstream model that can directly exploit the relational and transition information it generates. For each trace, we construct a directed graph whose nodes are events (rows in \texttt{Events}) and whose edges follow temporal adjacency and inferred FSM transitions; node features include a learned embedding of \texttt{template\_id}, a learned embedding of \texttt{component}, and lightweight numeric features such as normalized position-in-trace and log time deltas. The GNN produces event-level hidden states which are aggregated with global mean pooling to obtain a trace representation, followed by an MLP for binary anomaly prediction. Unless otherwise stated, we use a 2-layer GraphSAGE encoder with hidden size 128, ReLU activations, and dropout 0.3, and train with Adam (learning rate $10^{-3}$, weight decay $5\times 10^{-4}$) for up to 200 epochs with early stopping (patience 20) on validation PR-AUC. Training uses mini-batches of 64 trace-graphs, and decision thresholds are selected on the validation set and then applied unchanged to the held-out real-only test set to ensure that reported gains reflect generalization to real anomalies rather than tuning on synthetic artifacts.

\section{Evaluation Metric Choice}
\subsection{PR-AUC (Utility on Real-Only Test Set)}
\label{subsec:metric_prauc}
\textbf{Why.} Log anomaly detection is highly imbalanced (few anomalies), making accuracy and ROC-AUC potentially misleading. PR-AUC emphasizes performance on the positive (anomaly) class and is standard for imbalanced detection. We compute PR-AUC on a held-out \emph{real-only} test set to ensure utility reflects generalization to real failures rather than synthetic artifacts.\\
\textbf{Formula.} Let precision and recall at threshold $\tau$ be
\[
\mathrm{Prec}(\tau)=\frac{\mathrm{TP}(\tau)}{\mathrm{TP}(\tau)+\mathrm{FP}(\tau)},\qquad
\mathrm{Rec}(\tau)=\frac{\mathrm{TP}(\tau)}{\mathrm{TP}(\tau)+\mathrm{FN}(\tau)}.
\]
PR-AUC is the area under the precision--recall curve:
\[
\mathrm{PR\text{-}AUC}=\int_{0}^{1}\mathrm{Prec}(\mathrm{Rec})\,d\,\mathrm{Rec},
\]
approximated by a trapezoidal sum over thresholds.

\subsection{FPR at 95\% Recall (Operational Sensitivity)}
\label{subsec:metric_fpr95}
\textbf{Why.} Many operational settings require very high recall (catch nearly all failures). We therefore report the false positive rate at a fixed recall of 95\%, capturing the alert burden when sensitivity is constrained.\\
\textbf{Formula.} Choose a threshold $\tau_{0.95}$ such that $\mathrm{Rec}(\tau_{0.95})\ge 0.95$ (typically the smallest $\tau$ achieving this). Then
\[
\mathrm{FPR@95\%R}=\frac{\mathrm{FP}(\tau_{0.95})}{\mathrm{FP}(\tau_{0.95})+\mathrm{TN}(\tau_{0.95})}.
\]
(Equivalently, one may report $\mathrm{Rec}@\mathrm{FPR}=\alpha$ for a fixed $\alpha$.)

\subsection{Rare-Event Slice PR-AUC (or F1)}
\label{subsec:metric_rare}
\textbf{Why.} The main goal of LogSynthFSM is to improve detection of sparse but critical behaviors. We evaluate performance on a rare-event slice to measure gains on underrepresented execution patterns rather than average-case behavior.\\
\textbf{Slice definition.} Let $S(\mathrm{trace})$ be the set of transitions (or states) used by a trace. Define a rarity score using real training frequencies:
\[
r(\mathrm{trace})=\min_{e\in S(\mathrm{trace})} f_{\mathrm{real}}(e),
\]
where $f_{\mathrm{real}}(e)$ is the empirical frequency of transition/state $e$ in real training data. The rare slice is
\[
\mathcal{D}_{\mathrm{rare}}=\{\mathrm{trace}\in \mathcal{D}_{\mathrm{test}}:\ r(\mathrm{trace})\le \rho\},
\]
for a chosen rarity threshold $\rho$ (e.g., bottom 10\% by $r$). We then compute PR-AUC (or F1) on $\mathcal{D}_{\mathrm{rare}}$ using the same definitions as above. For completeness,
\[
\mathrm{F1}(\tau)=\frac{2\,\mathrm{Prec}(\tau)\,\mathrm{Rec}(\tau)}{\mathrm{Prec}(\tau)+\mathrm{Rec}(\tau)}.
\]

\subsection{Transition Validity Rate (TVR)}
\label{subsec:metric_tvr}
\textbf{Why.} TVR directly measures whether synthetic traces respect the inferred execution FSM, preventing gains driven by unrealistic control flow. This is a hard structural constraint that complements predictive metrics.\\
\textbf{Formula.} Let a trace have inferred state sequence $(s_1,s_2,\dots,s_T)$. Let $\mathcal{E}$ be the set of valid FSM edges. Then
\[
\mathrm{TVR}=\frac{\sum_{\mathrm{trace}}\sum_{t=1}^{T-1}\mathbf{1}\big[(s_t,s_{t+1})\in \mathcal{E}\big]}
{\sum_{\mathrm{trace}}(T-1)}.
\]

\subsection{$k$-gram Path Jensen--Shannon Divergence}
\label{subsec:metric_js}
\textbf{Why.} Even if individual transitions are valid, synthetic executions can have unrealistic higher-order path statistics (e.g., wrong motifs, missing loops). $k$-gram JS divergence compares process-level patterns between real and synthetic traces in a distributional, model-agnostic way.\\
\textbf{Construction.} From each trace, extract all contiguous $k$-grams of states (or templates),
\[
g_t=(s_t,s_{t+1},\dots,s_{t+k-1}),\quad t=1,\dots,T-k+1,
\]
and build empirical distributions $P_k$ (real) and $Q_k$ (synthetic) over $k$-grams.\\
\textbf{Formula.} Let $M=\tfrac{1}{2}(P_k+Q_k)$. The Jensen--Shannon divergence is
\[
\mathrm{JS}(P_k\|Q_k)=\tfrac{1}{2}\mathrm{KL}(P_k\|M)+\tfrac{1}{2}\mathrm{KL}(Q_k\|M),
\]
with
\[
\mathrm{KL}(P\|M)=\sum_{g} P(g)\log\frac{P(g)}{M(g)}.
\]
We report $\mathrm{JS}(P_k\|Q_k)$ (lower is better).

\subsection{Trace-Level C2ST AUC (Real vs Synthetic Distinguishability)}
\label{subsec:metric_c2st}
\textbf{Why.} A classifier two-sample test quantifies whether synthetic traces are distinguishable from real traces; values near 0.5 indicate high realism. This helps detect synthetic artifacts (including multi-agent collusion effects) that may not violate FSM constraints but still reveal distributional shortcuts.\\
\textbf{Procedure and formula.} Construct a dataset of traces labeled $y=1$ (real) and $y=0$ (synthetic). Train a binary classifier $h$ (e.g., logistic regression or a shallow MLP) on held-out folds and compute the ROC-AUC:
\[
\mathrm{C2ST\ AUC}=\Pr\big(h(x^+)>h(x^-)\big),
\]
where $x^+$ is a randomly drawn real trace and $x^-$ a randomly drawn synthetic trace. We report the cross-validated AUC; values closer to $0.50$ are preferred.

\subsection{Accuracy (Model Transferability Only)}
\label{subsec:metric_accuracy}
\textbf{Why.} Accuracy is reported only in the model transferability study in Table~\ref{tab:model_transferability}, where the goal is to compare relative behavior across LLM backbones under the same downstream classifier and evaluation split. We do not use accuracy as the primary anomaly-detection metric because class imbalance can make it overly optimistic; PR-AUC remains the main utility metric for the full benchmark.\\
\textbf{Formula.} Let $\tau$ denote the decision threshold used to convert anomaly scores into binary predictions. With threshold-dependent counts $\mathrm{TP}(\tau)$, $\mathrm{TN}(\tau)$, $\mathrm{FP}(\tau)$, and $\mathrm{FN}(\tau)$, accuracy is
\[
\mathrm{Accuracy}(\tau)=
\frac{\mathrm{TP}(\tau)+\mathrm{TN}(\tau)}
{\mathrm{TP}(\tau)+\mathrm{TN}(\tau)+\mathrm{FP}(\tau)+\mathrm{FN}(\tau)}.
\]

\section{Theoretical Results}
\subsection{FSM-Guided Relational Synthesis}

\begin{theorem}[Why FSM-guided synthesis yields valid relational data and improves anomaly detection]
\label{thm:fsm_guided}
Assume real executions are generated by a latent Markov process over states with emissions.
Let $\mathcal{S}$ be a finite state set, and let a real trace be
\[
\tau = (s_1, e_1, \theta_1), (s_2, e_2, \theta_2), \ldots, (s_T, e_T, \theta_T),
\]
where $s_t \in \mathcal{S}$ is the latent execution state, $e_t$ is an emitted event template, and $\theta_t$ are event parameters.
Let the real trace distribution be $P$ with transition matrix $A$ and emission conditionals
\[
\Pr_P(s_{t+1}=j \mid s_t=i)=A_{ij},\qquad \Pr_P(e_t,\theta_t \mid s_t=i)=B_i(e,\theta).
\]
Let $\widehat{P}$ be the synthetic distribution produced by FSM-guided synthesis using an inferred transition matrix $\widehat{A}$ and inferred emissions $\widehat{B}_i$ with the constraint that every synthetic trace satisfies
\[
(s_t,s_{t+1}) \in \mathcal{E} \quad\text{and}\quad e_t \in \mathcal{M}(s_t)\ \ \text{for all } t,
\]
where $\mathcal{E}$ is the FSM edge set and $\mathcal{M}$ is the state-to-template admissibility map.
Let $R(\cdot)$ be a deterministic trace-to-relational transformation producing tables
\[
\begin{aligned}
R(\tau) = \{&
\texttt{Traces},
\texttt{Events},
\texttt{States},
\texttt{Transitions},\\
&
\texttt{Templates},
\texttt{StateTemplateMap},
\texttt{EventParams}
\}
\end{aligned}
\]

with referential integrity constraints (primary keys and foreign keys) defined by the schema.

Then:

\textbf{(Validity).} Every synthetic trace $\tau \sim \widehat{P}$ yields a relational instance $R(\tau)$ that is (i) referentially consistent and (ii) FSM-valid in the sense that the \texttt{Events} table induces only transitions in $\mathcal{E}$ and emits only admissible templates in $\mathcal{M}$.

\textbf{(Utility under bounded synthesis error).} Let $h$ be any downstream anomaly detector trained on graphs derived from $R(\tau)$, and let $\ell(h,\tau)\in[0,1]$ be its test loss on real traces. Suppose the synthesis error is bounded in total variation:
\[
\mathrm{TV}(P,\widehat{P}) \le \varepsilon.
\]
Train $h$ on a mixture distribution $Q_\alpha = (1-\alpha)P_{\mathrm{train}} + \alpha \widehat{P}$, and evaluate on a real-only test distribution $P_{\mathrm{test}}$ with $P_{\mathrm{test}} \approx P$.
Then for any fixed $h$,
\[
\bigl|\mathbb{E}_{\tau \sim P}[\ell(h,\tau)] - \mathbb{E}_{\tau \sim \widehat{P}}[\ell(h,\tau)]\bigr|
\le \varepsilon,
\]
and consequently the risk on real traces satisfies
\[
\mathbb{E}_{\tau \sim P}[\ell(h,\tau)]
\le
\mathbb{E}_{\tau \sim Q_\alpha}[\ell(h,\tau)] + \alpha \varepsilon.
\]
In particular, when rare valid transitions are amplified in $\widehat{P}$ without increasing $\varepsilon$ substantially, training on $Q_\alpha$ improves estimation on low-frequency modes while controlling the synthetic-to-real mismatch penalty by $\alpha\varepsilon$.
\end{theorem}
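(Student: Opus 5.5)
The validity part is a construction argument. The utility part reduces to the standard fact that total variation controls differences of expectations of $[0,1]$-valued functions. I would handle them in that order.

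\textbf{Validity.} I fix the specification of $R(\cdot)$ so that the static tables do not depend on $\tau$:
\begin{itemize}
\item \texttt{States} lists $\mathcal{S}$;
\item \texttt{Transitions} lists $\mathcal{E}$;
\item \texttt{Templates} lists the template alphabet;
\item \texttt{StateTemplateMap} lists the pairs $\{(i,e): e\in\mathcal{M}(i)\}$.
\end{itemize}
The instance tables are then filled from $\tau$ alone. \texttt{Traces} gets one row with a fresh \texttt{trace\_id}. \texttt{Events} gets row $t$ keyed by $(\texttt{trace\_id},t)$ with values $(s_t,e_t)$. \texttt{EventParams} gets one row per key of $\theta_t$.

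Referential consistency (i) then follows by checking each constraint.
\begin{itemize}
\item Primary keys are unique because the event indices $t$ are distinct and parameter keys are distinct within $\theta_t$.
\item Every foreign key points into a table that contains the referenced value. Here $s_t\in\mathcal{S}$ because $\widehat P$ is supported on $\mathcal{S}$-valued paths, and $e_t$ lies in the template alphabet because $e_t\in\mathcal{M}(s_t)$.
\end{itemize}
FSM-validity (ii) follows because the transitions induced by \texttt{Events} ordered by $t$ are exactly the pairs $(s_t,s_{t+1})$, and these lie in $\mathcal{E}$ almost surely by the stated constraint on $\widehat P$. Likewise, each $(s_t,e_t)$ is a row of \texttt{StateTemplateMap}. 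Since the constraint is imposed on every synthetic trace rather than only in expectation, the conclusion holds for every $\tau$ in the support of $\widehat P$.

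\textbf{Utility.} For a fixed $h$, the map $\tau\mapsto\ell(h,\tau)$ is measurable with values in $[0,1]$; this is where the determinism of $R$ and of the graph construction matters. I use the variational characterization
\[
\mathrm{TV}(P,\widehat P)=\sup_{f:\,0\le f\le 1}\bigl|\mathbb{E}_P f-\mathbb{E}_{\widehat P} f\bigr|.
\]
This can also be seen directly by writing $\mathbb{E}_P f-\mathbb{E}_{\widehat P} f=\int f\,d(P-\widehat P)$, splitting $P-\widehat P$ into its positive and negative parts, and using $0\le f\le 1$. It gives the first inequality immediately.

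For the second inequality I expand the mixture:
\[
\mathbb{E}_{Q_\alpha}\ell=(1-\alpha)\mathbb{E}_{P_{\mathrm{train}}}\ell+\alpha\,\mathbb{E}_{\widehat P}\ell .
\]
Under the identification $P_{\mathrm{train}}=P$, subtracting gives
\[
\mathbb{E}_P\ell-\mathbb{E}_{Q_\alpha}\ell=\alpha\bigl(\mathbb{E}_P\ell-\mathbb{E}_{\widehat P}\ell\bigr)\le\alpha\varepsilon,
\]
which is the claimed bound. The final ``in particular'' sentence I would present as an interpretation rather than a formal claim. The point is that the penalty depends on $\widehat P$ only through $\varepsilon$, so reweighting $\widehat A$ toward rare edges of $\mathcal{E}$ costs nothing beyond its effect on $\varepsilon$.

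\textbf{Main obstacle.} The hard step is the bookkeeping of which distributions are identified. The statement writes $P_{\mathrm{test}}\approx P$ and leaves the relation between $P_{\mathrm{train}}$ and $P$ implicit, but the $\alpha\varepsilon$ bound needs $P_{\mathrm{train}}=P$. Without that, I would carry an extra term $(1-\alpha)\,\mathrm{TV}(P,P_{\mathrm{train}})$. Similarly, I would add $\mathrm{TV}(P_{\mathrm{test}},P)$ to transfer the bound to the actual test risk. My plan is to state these as explicit assumptions (i.i.d.\ real train and test from $P$) and note the general form with the additional TV terms.

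A secondary subtlety is that the bound holds for any fixed $h$, not for the data-dependent $h$ learned from $Q_\alpha$. Because the inequality is uniform over $h$, however, it applies to the trained $h$ as well, so no generalization argument is needed for the stated claim.
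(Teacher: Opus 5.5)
Your proposal is correct and takes the same route as the paper. Validity comes from the construction of $R$: referenced keys exist before the rows that depend on them, and the sampled path and emissions lie in $\mathcal{E}$ and $\mathcal{M}$. Utility comes from the total-variation bound for $[0,1]$-valued losses plus the mixture decomposition under $P_{\mathrm{train}}=P$, which the paper also assumes. Your identity $\mathbb{E}_P\ell-\mathbb{E}_{Q_\alpha}\ell=\alpha(\mathbb{E}_P\ell-\mathbb{E}_{\widehat P}\ell)$ is actually the clean way to obtain the $\alpha\varepsilon$ term. The paper's displayed chain $\mathbb{E}_P\ell\le\mathbb{E}_{\widehat P}\ell+\varepsilon=\mathbb{E}_{Q_\alpha}\ell+\alpha\varepsilon$ is not a valid equality as written, since for $\alpha<1$ it would force $\mathbb{E}_P\ell=\mathbb{E}_{\widehat P}\ell+\varepsilon$.
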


\begin{proof}
\textbf{Validity.}
By construction, the synthesis procedure samples a state path $(s_1,\ldots,s_T)$ such that $(s_t,s_{t+1})\in \mathcal{E}$ for all $t$.
It then emits each template $e_t$ from the admissible set $\mathcal{M}(s_t)$, and assigns parameters $\theta_t$ to the corresponding \texttt{EventParams} rows.
The relational mapping $R$ creates one \texttt{Trace} row per trace id, one \texttt{Event} row per timestep, and keys are defined so that each \texttt{Event} references an existing \texttt{Trace}, \texttt{State}, and \texttt{Template}.
Since each referenced object is created before its dependent rows, all foreign keys resolve, so referential integrity holds.
Further, consecutive \texttt{Event} states encode only transitions in $\mathcal{E}$, and each \texttt{Event} template lies in $\mathcal{M}(s_t)$, establishing FSM-validity and template admissibility.

\textbf{Utility.}
For any bounded loss $\ell(h,\tau)\in[0,1]$, the difference in expectations under two distributions is bounded by total variation:
\[
\bigl|\mathbb{E}_{\tau \sim P}[\ell(h,\tau)] - \mathbb{E}_{\tau \sim \widehat{P}}[\ell(h,\tau)]\bigr|
\le
\mathrm{TV}(P,\widehat{P})
\le \varepsilon.
\]
Now write the mixture expectation:
\[
\mathbb{E}_{\tau \sim Q_\alpha}[\ell(h,\tau)]
=
(1-\alpha)\mathbb{E}_{\tau \sim P_{\mathrm{train}}}[\ell(h,\tau)]
+
\alpha \mathbb{E}_{\tau \sim \widehat{P}}[\ell(h,\tau)].
\]
Assuming $P_{\mathrm{train}}$ and $P$ are drawn from the same execution process (standard i.i.d. train-test setting, with a real-only test set),
replace $P_{\mathrm{train}}$ by $P$ in the expectation-level argument and use the total-variation bound to obtain
\[
\mathbb{E}_{\tau \sim P}[\ell(h,\tau)]
\le
\mathbb{E}_{\tau \sim \widehat{P}}[\ell(h,\tau)] + \varepsilon
=
\mathbb{E}_{\tau \sim Q_\alpha}[\ell(h,\tau)] + \alpha\varepsilon,
\]
because $\mathbb{E}_{\tau \sim Q_\alpha}[\ell(h,\tau)]$ contains the term $\alpha\mathbb{E}_{\tau \sim \widehat{P}}[\ell(h,\tau)]$ and $(1-\alpha)\mathbb{E}_{\tau \sim P}[\ell(h,\tau)]$.
Thus, augmentation helps insofar as it reduces the training objective on $Q_\alpha$ by increasing coverage of rare but valid modes, while the cost of mismatch is controlled by $\alpha\varepsilon$.
Finally, FSM guidance reduces $\varepsilon$ relative to unconstrained synthesis because it removes probability mass on structurally invalid traces (invalid transitions and inadmissible emissions), concentrating $\widehat{P}$ onto the support of valid executions, which is exactly what the downstream anomaly detector must generalize over.
\end{proof}

\subsection{Limitations and Failure Modes}

While LogSynthFSM demonstrates strong performance across diverse system logs, it is designed to operate under several practical assumptions that define its scope. First, the framework assumes that execution logs contain sufficient structural regularity for latent state machines to be recoverable. In environments where logging is extremely sparse, highly noisy, or dominated by free-form diagnostic messages without recurring control-flow patterns, state discovery may yield overly coarse abstractions. In such cases, LogSynthFSM remains robust by falling back to conservative synthesis, but the benefits of rare-path amplification may be reduced.

Second, the inferred execution state machine represents an approximation of true system behavior. When systems exhibit long-range dependencies or context-sensitive transitions that cannot be well captured by a finite-state abstraction, FSM-guided synthesis may underrepresent higher-order dynamics. For concurrent logs, the Execution Parsing Agent groups events by trace identifiers and component metadata before FSM inference, which yields a TVR of $0.985$ in our benchmarks. Concurrency without stable identifiers remains challenging and motivates hierarchical or concurrent state models. Importantly, this limitation reflects an explicit modeling choice rather than a failure of the synthesis mechanism, and can be mitigated by extending the state representation or incorporating hierarchical or semi-Markov structure.

Third, as with all LLM-assisted pipelines, LogSynthFSM depends on the quality of intermediate representations produced by its agents. Although the multi-agent design and constraint-based evaluation substantially reduce hallucination and structural drift, rare failure cases can occur when multiple agents converge on an overly conservative interpretation of execution structure. In practice, these cases are detected by external, real-only evaluation metrics and result in no artificial gains on downstream tasks, preserving correctness.

Finally, LogSynthFSM is intended to complement, not replace, real execution data. Synthetic traces are generated under explicit validity constraints and evaluated against held-out real data, ensuring that the system does not memorize or leak sensitive executions. However, in scenarios where privacy requirements prohibit any synthetic generation conditioned on real logs, additional privacy-preserving mechanisms such as differential privacy can be incorporated without altering the core framework.

Overall, these limitations delineate the operating regime of LogSynthFSM and point toward natural extensions, including richer execution abstractions, adaptive state representations, and tighter privacy guarantees. Importantly, none of these scenarios undermine the central finding that execution-aware, FSM-guided relational synthesis provides a principled and effective foundation for robust anomaly detection in complex software systems.

\section{Future Work}
LogSynthFSM opens several promising avenues for future research at the intersection of system observability, structure-aware data synthesis, and large language models. A natural extension is to move beyond flat finite-state abstractions toward richer execution models, such as hierarchical state machines, semi-Markov processes, or probabilistic program representations. These extensions would allow the framework to capture long-range dependencies, nested execution phases, and context-sensitive behaviors that arise in modern distributed and microservice-based systems, while preserving the interpretability and constraint guarantees that underpin LogSynthFSM.

Another important direction is the integration of formal guarantees on privacy and memorization. While our current design already mitigates leakage through real-only evaluation, constraint-based synthesis, and realism checks, incorporating differential privacy or membership-inference defenses at the agent level would enable principled privacy bounds for synthetic execution data. This would further expand the applicability of LogSynthFSM to regulated environments such as healthcare, finance, and critical infrastructure monitoring.

From a modeling perspective, LogSynthFSM provides a general template for combining LLM reasoning with structured generative priors. Future work can explore tighter coupling between neural execution models and symbolic constraints, enabling agents to reason jointly over logs, metrics, traces, and code-level artifacts. In particular, aligning discovered execution states with source-code control-flow graphs, configuration dependencies, or runtime metrics could enable unified system models that support root-cause analysis, what-if simulation, and proactive reliability engineering.

Finally, we view LogSynthFSM as a stepping stone toward execution-aware foundation models for system data. By demonstrating that latent execution structure can be discovered, enforced, and exploited for synthetic data generation, this work suggests a broader paradigm in which large-scale system intelligence is built not from raw sequences alone, but from relational, process-grounded representations. We believe this direction has the potential to reshape how system logs are modeled, shared, and leveraged across anomaly detection, performance analysis, and automated debugging.

\end{document}